\newtheorem{theorem}{Theorem}[section]
\newtheorem{corollary}[theorem]{Corollary}
\newtheorem{definition}[theorem]{Definition}
\newtheorem{lemma}[theorem]{Lemma}
\newtheorem{proposition}[theorem]{Proposition}
\newtheorem{remark}{Remark}
\newtheorem{assumption}{Assumption}
\definecolor{myred}{RGB}{108,0,0}
\DeclareMathOperator*{\argmin}{arg\,min}
\newcommand{\real}{\mathbb{R}}
\newcommand{\ess}{\mathrm{ess}}
\newcommand{\eye}{\mathrm{I}}
\newcommand{\tran}{^{\mkern-1.5mu\mathrm{T}}\!}
\renewcommand{\real}{\mathbb{R}}
\newcommand{\Ac}{{\mathcal{A}}}
\newcommand{\Bc}{{\mathcal{B}}}
\newcommand{\Cc}{{\mathcal{C}}}
\newcommand{\Kc}{{\mathcal{K}}}
\newcommand{\Lc}{{\mathcal{L}}}
\newcommand{\Yc}{\mathcal{Y}}
\newcommand{\Xc}{\mathcal{X}}
\newcommand{\Zc}{\mathcal{Z}}
\newcommand{\Uc}{\mathcal{U}}
\newcommand{\Hc}{\mathcal{H}}
\newcommand{\expect}{\mathbb{E}}
\newcommand{\diff}{\mathrm{d}}
\newtheorem{example}{Example}[section]
\begin{document}
\title{An Information Theory of Finite Abstractions and their Fundamental Scalability Limits}

\author{\authorblockN{Giannis Delimpaltadakis and Gabriel Gleizer {}}\thanks{Giannis Delimpaltadakis is with the Robust and Intelligent Autonomous Systems lab, AI4I Institute, Turin, Italy. Gabriel Gleizer is with the Delft Center for Systems and Control, Mechanical Engineering, Delft University of Technology. Emails: \texttt{ioannis.delimpaltadakis@ai4i.it, g.gleizer@tudelft.nl}. \newline \indent This research is partially supported by the project ``Chaotic sampling for secure and sustainable networks of control systems'' with file number 21937 of the research programme VENI AES 2024 which is (partly) financed by the Dutch Research Council (NWO) under the grant \url{https://doi.org/10.61686/WZNAX74774}.}}   

\maketitle
\begin{abstract}
Finite abstractions are discrete approximations of dynamical systems, such that the set of abstraction trajectories contains all system trajectories. There is a consensus that abstractions suffer from the \emph{curse of dimensionality}: for the same ``accuracy" (how closely the abstraction represents the system), the abstraction size scales poorly with system dimensions. And yet, after decades of research on abstractions, there are no formal results on their size-accuracy tradeoff. In this work, we derive a statistical, quantitative theory of abstractions' size-accuracy tradeoff and uncover fundamental limits on their scalability, through \emph{rate-distortion theory}---the information theory of lossy compression. Abstractions are viewed as \emph{encoder-decoder} pairs, encoding trajectories of dynamical systems. \emph{Rate} measures abstraction size, while \emph{distortion} describes accuracy, defined as the spatial average deviation between abstract trajectories and system ones. We obtain a fundamental lower bound on the minimum achievable abstraction distortion, given the system dynamics and the abstraction size; and vice-versa a lower bound on the minimum size, for given distortion. The bound depends on the complexity of the dynamics, through trajectory entropy. We demonstrate its tightness on some dynamical systems. Finally, we showcase how this new theory enables constructing minimal abstractions, optimizing the size-accuracy tradeoff, through an example on a chaotic system.
\end{abstract}

\maketitle

\section{Introduction}
Modern engineering systems are becoming more complex and must meet intricate specifications in safety-critical situations. For instance, a self-driving car must follow traffic rules, avoid collisions, and optimize speed and fuel consumption. Due to the complexity of these systems, traditional analytic methods for verification and control are intractable. For over two decades, to address verification and control of complex dynamics and objectives, \emph{abstraction}-based methods have flourished \cite{tabuada2009verification,lavaei2022automated}. Given a dynamical system, these methods construct a finite system---the abstraction---, arising from partitioning the state (and control) space of the original system, such that all trajectories of the system are contained, in a formal sense, in the set of abstraction trajectories. Employing this property, one may solve an intractable verification or control problem for the original system over the finite abstraction, with formal guarantees of correctness. Over the years, research on abstractions has spanned deterministic systems \cite{girard2009approximately,rungger2016scots, mallik2018compositional}, stochastic systems \cite{zamani2014symbolic,lahijanian2015formal, abate2010approximate}, and data-driven scenarios \cite{coppola2023data, badings2023robust, devonport2021symbolic, kazemi2024data}.

Despite their immense success, there is a consensus that abstractions suffer from the \emph{curse of dimensionality}, limiting their practical relevance; for a given accuracy (how closely the abstraction describes the true dynamics), the abstraction size scales poorly with system dimensions. And, as a rule of thumb, for better accuracy, a larger abstraction size is needed. However, after considerable interest on abstractions in the past decades, there are still no formal results concerning their curse of dimensionality and size-accuracy tradeoff.

\subsection*{Contributions}
In this work, we derive a statistical, quantitative theory of abstractions' size-accuracy tradeoff and uncover fundamental limits on their scalability. To that end, we establish connections with \emph{rate-distortion theory} -- the branch of information theory studying lossy compression \cite[Chapter 10]{cover1999elements}. The key observation for the whole theory is that abstractions are information-theoretic \emph{encoder-decoder} pairs, encoding trajectories of dynamical systems, in a higher-dimensional, ambient space. \emph{Rate} represents abstraction size, while \emph{distortion} is defined as the spatial average deviation between abstract trajectories and system ones, thus capturing the average accuracy of an abstraction. Then, building on recent developments in rate-distortion theory for generalized measurable sets \cite{riegler2018rate,riegler2023lossy}, we derive fundamental limits of abstractions' size-accuracy tradeoff: we obtain a fundamental lower bound on the minimum achievable abstraction distortion, for given system dynamics and abstraction size; conversely, we also obtain a lower bound on the minimum abstraction size, for given distortion. The  fundamental lower bound depends on the complexity of the dynamics, through generalized entropy, which we show how to compute. We demonstrate the tightness of the bound on certain dynamical systems. Finally, we showcase how the developed theory can be employed to construct optimal abstractions, in terms of the size-accuracy tradeoff, through an example on a chaotic system, and we provide a discussion towards a general procedure for constructing optimal abstractions.
    
\subsection*{Related work}
Through decades of research, there has been considerable effort to construct scalable abstractions. Indicatively, \cite{esmaeil2013adaptive,adams2022formal, tazaki2009discrete} adapt the partition's resolution depending on the \emph{local uncertainty} a given state-space region induces to the abstraction. Further, \cite{hsu2018multi} constructs multi-resolution abstractions, employing feedback-refinement relations. The work \cite{calbert2024smart} employs optimal control, such that the generated trajectories result in smaller abstraction cells and only a portion of the state space needs to be partitioned. Although the above methods result in more scalable abstractions, they neither provide quantitative results on the size-accuracy tradeoff, nor optimize some metric describing it. Another approach to derive more accurate abstractions is introducing memory \cite{schmuck2014asynchronous, banse2025memory}, based on sequences of outputs. In \cite{gleizer2022chaos}, it is shown that the size of such memory-based abstractions increases exponentially with the sequence length for deterministic chaotic systems. Apart from adaptive-partitioning techniques, compositional methods \cite{mallik2018compositional,lavaei2020compositional} decompose the system to smaller ones, that are abstracted more efficiently. However, they do not address scalability issues of abstracting each subsystem. Further, it is also worth mentioning \cite{delimpaltadakis2023interval}, which, for a particular class of stochastic abstractions, demonstrates that partitioning the control space is unnecessary.

The connection between information theory and symbolic dynamics is well-known \cite{lind2021introduction}; listing the whole literature on the topic is impossible. Worth mentioning is the work in \cite{lindenstrauss2018rate}, which employs rate-distortion theory to characterize complexity of dynamical systems and their relationship with so-called \emph{shifts}\footnote{A class of discrete systems. Abstractions can be cast as shifts.}. Nonetheless, this work does not consider the deviation between a shift and a dynamical system, but rather focuses on asymptotic results (arbitrarily large partition size, steady-state trajectories) and the qualitative question of if a system can be embedded into a shift. Thus, it does not provide a quantitative theory of the size-accuracy tradeoff. Finally, the works \cite{abel2019state,biza2020learning,larsson2022generalized} employ rate-distortion theory, to compress models that are already discrete and do not focus on abstracting continuous dynamics with formal guarantees.

\section{Preliminaries}

\subsection{Measure spaces, Hausdorff measure, generalized entropy}
For our purposes, we make use of information theory over general measurable spaces, based on \cite{riegler2018rate,riegler2023lossy}. Thus, we first recall some related notions. We denote the $m$-dimensional Hausdorff measure\footnote{The Hausdorff measure is a generalization of the Lebesgue measure, generalizing the notions of curve length or surface area. E.g., $\Hc^1(\mathcal{C})=2\pi$, where $\Cc$ is the unit circle embedded in $\real^n$.} by $\Hc^m$. Denote the restriction of $\Hc^m$ to the compact set $\Kc$ by $\Hc^m_{\Kc}$. 
Consider a measure $\mu$ over the measure space $(\Xc, \Sigma_\Xc, \nu)$, where $\Sigma_\Xc$ is the Borel $\sigma$-algebra of $\Xc\subseteq\real^n$. When $\mu$ is absolutely continuous w.r.t.~$\nu$ (denoted by $\mu\ll\nu$), we denote the Radon--Nikodym derivative by $\frac{\diff \mu}{\diff \nu}$. When $\nu=\Hc^m_{\Xc}$ (assuming $\Xc$ is $m-$dimensional), then $\frac{\diff \mu}{\diff \Hc^m_{\Xc}}$ is the probability distribution associated to $\mu$. Absolute continuity $\mu\ll\Hc^m_{\Xc}$ suggests that $\mu$ is not concentrated in arbitrarily small balls in $\Xc$. We denote the volume of the unit ball in $\real^n$ as $v_n \coloneqq \frac{\pi^{n/2}}{\Gamma(n/2+1)}$, where the Gamma function $\Gamma(a) = \int_{0}^\infty t^{a-1}e^{-t}\diff t$.

Let $\Xc\subset\real^n$ be a finite union of compact, $m$-dimensional, $C^1$-manifolds. Denote by $c_\Xc>0$ the constant such that 
\begin{equation}\label{eq:c}
        \Hc^m_{\Xc} (B(\hat{x},\delta))\leq c_{\Xc}\delta^m,\quad \text{for all }\hat{x}\in\real^n \ \text{and }\delta >0,
\end{equation}
where $B(\hat{x},\delta)\coloneqq \{x\in\Xc: \ \|x-\hat{x}\| < \delta\}$. The constant $c_{\Xc}$ always exists and is finite, as per \cite[Lemma 1]{riegler2018rate}.

Consider a random variable $x$, distributed over the measure space $(\Xc, \Sigma_\Xc, \Hc^m_{\Xc})$ with probability measure $\mu_x$. The generalized entropy of $x$ (w.r.t the Hausdorff measure) is
\begin{equation}\label{eq:generalized entropy}
    h(x) \coloneqq -\expect_{x}\!\left[\log\Big(\frac{\diff \mu_x}{\diff \Hc^m_{\Xc}}\Big)\right],
\end{equation}
where $\expect_{x}[\cdot]$ denotes the expectation operator w.r.t. the random variable $x$. The generalized entropy is the extension of the classical Shannon entropy to continuous spaces, and is a measure of uncertainty or complexity of a random variable. For a measure space $(\Xc,\Sigma_{\Xc}, \Hc^m_{\Xc})$ with $0<\Hc^m_{\Xc}(\Xc)<\infty$: a) $h(x)$ is maximized for the uniform distribution $\mu_{x} = \Hc^1_{\Xc}/\Hc^1_{\Xc}(\Xc)$, b) $h(x)$ is bounded, when $\mu_x\ll \Hc^m_{\Xc}$. Finally, employing a similar generalization as in \eqref{eq:generalized entropy}, let us denote the \emph{generalized Rényi entropy} with parameter $a\in \real_{+}\setminus\{1\}$ by $h_a(x):=\frac{1}{1-a}\log \expect_x [(\frac{\diff \mu_x}{\diff \Hc_\Xc^m})^{a-1}]$.

The example below shows how the above apply to computing the entropy of a dynamical system's trajectories.

\begin{example}[Entropy of trajectories of the doubling map] \label{ex:doubling map}
    Consider the dynamical system $x^+=f(x)$, where the doubling map $f:[0,1]\to[0,1]:x\mapsto 2x\ \mathrm{mod}\ 1$. Consider the set of $3$-length trajectories of the system $\Bc:=\{(x_0,f(x_0),\allowbreak f(f(x_0))):\ x_0\in [0,1]\} \subseteq [0,1]^3$. Notice that $\Bc$ is the union of 4 straight-line segments: 
\begin{align*}
   \Bc =& \{(x_0,2x_0,4x_0):x_0\in[0,.25]\}\cup\\ &\{(x_0,2x_0,4x_0 - 1):x_0\in[.25,.5]\}\cup\\ &\{(x_0,2x_0-1,4x_0-2):x_0\in[.5,.75]\}\cup\\ &\{(x_0,2x_0-1,4x_0-3):x_0\in[.75,.1]\}.
\end{align*}
Further, consider random initial conditions $x_0\sim U[0,1]$, where $U[0,1]$ is the uniform distribution over $[0,1]$. It is well known that $U[0,1]$ is invariant under the doubling map. Thus, the random variable $\xi(x_0) =  (x_0,f(x_0),\allowbreak f(f(x_0)))\in \Bc$, that is the system trajectories, is uniformly distributed across $\Bc$; i.e., the probability measure $\mu_\xi= \Hc^1_{\Bc}/\Hc^1_{\Bc}(\Bc)$. Its generalized entropy is
\begin{align*}
    h(\xi)= -\expect_{\xi}\!\left[\log\Big(\frac{\diff \mu_{\xi}}{\diff \Hc^1_{\Bc}}\Big)\right]&=-\int_{\Bc}\log\Big(\frac{\diff \mu_{\xi}}{\diff \Hc^1_{\Bc}}\Big)\diff \mu_{\xi} \\&= -\int_{\Bc}\log(\frac{1}{\Hc^1_{\Bc}(\Bc)})\diff \mu_{\xi}\\
    &=\log(\sqrt{21})\int_{\Bc}\diff \mu_{\xi}= \log(\sqrt{21}),
\end{align*}
where we have used that the total length of the line segments is $\Hc^1_{\Bc}(\Bc)=\sqrt{21}$, and that $\int_{\Bc}\diff \mu_{\xi}=1$, as $\mu_{\xi}$ is a probability measure.
\end{example}  

\subsection{Rate-distortion theory on measurable spaces}\label{sec:rd}
\begin{figure}[t]
	    \centering
			\includegraphics[width = \linewidth]{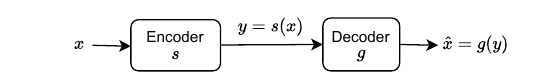}
            \caption{The typical source coding setting.}
			
		\label{fig:rsc}
\end{figure}
A typical setting in information theory is source coding, see Fig. \ref{fig:rsc}. A source emits a \emph{message} $x\in\Xc\subseteq\real^n$, which is a random variable over the measure space $(\Xc,\Sigma_\Xc, \Hc^m_{\Xc})$, with associated probability measure $\mu_x$, where $\Xc$ is assumed to be $m$-dimensional. The \emph{encoder} $s:\Xc\to\Yc$, where $\Yc$ is finite, outputs the coded message $s(x)=y\in\Yc$. Finally, the \emph{decoder} $g:\Yc\to\hat{\Xc}$, upon receiving $y$, decodes it into $g(y)=\hat{x}$. Compression takes place by encoding the continuous message $x$ into a low-dimensional, finite coded message $y$. The encoder cardinality $|\Yc|$ determines the compression, and the \emph{rate} is defined by $\log|\Yc|$. A \emph{distortion} function $d:\Xc\times\hat{\Xc}\to \real_+$ measures the deviation of $\hat{x}$ from the original message $x$. A typical distortion function, when $\hat{\Xc}=\real^n$, is the squared error $d(x,\hat{x}) = \|x-\hat{x}\|^2$.\footnote{Here, we present a simplified setting of source coding, where the encoder-decoder is deterministic, and the source emits a single message. For the general theory, see \cite{cover1999elements,riegler2023lossy}.}   

Of particular interest is the fundamental limit of the rate-distortion tradeoff, i.e. the following quantity:
\begin{equation*}
\begin{aligned}
    D(R) := \inf_{s,g}& \ \expect_{x} [d(x,\hat{x})\mid s,g]\\
    \mathrm{s.t.}& \ s:\Xc\to\Yc, \ g:\Yc\to \hat{\Xc},\\
    &\ \log|\Yc|\leq R, \ y= s(x), \ \hat{x} = g(y),
\end{aligned}
\end{equation*}
where the expectation is taken w.r.t. the random variable $x$. In words, $D(R)$ is the minimum achievable average distortion, for a given rate threshold $R$. The function $D(R)$ has an inverse, $R(D)$, which is the minimum rate, for a given maximum expected distortion threshold $D$. The following result provides a fundamental lower bound on $D(R)$.

\begin{theorem}[Generalized Shannon lower bound {\cite[Thm. 3.1, simplified]{riegler2023lossy}}]\label{thm:slb}
    Let $\Xc\subseteq\real^n$ be a finite union of compact, $m$-dimensional, $C^1$-manifolds, and $\mu_x \ll \Hc^m_{\Xc}$. Assume that $\hat{\Xc}\subseteq\real^n$ and that $(\hat{\Xc},\Sigma_{\hat{\Xc}})$ is measurable. Consider the Euclidean distortion fucntion $d:\Xc\times\hat{\Xc}\to \real_+:(x,\hat{x})\mapsto \|x-\hat{x}\|^2$. Then
    \begin{align}
    D(R)&\geq D_*(R)\coloneqq \frac{m}{2}\Big(\frac{e^{-R+h(x)-m/2}}{c_{\Xc}\Gamma(1+m/2)}\Big)^{2/m}.
    \end{align}
\end{theorem}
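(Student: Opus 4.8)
The plan is to reduce the operational quantity $D(R)$ to an information-theoretic one and then prove a generalized Shannon lower bound for it. First I would observe that any feasible encoder--decoder pair $(s,g)$ with $\log|\Yc|\leq R$ makes $\hat{x}=g(s(x))$ a \emph{deterministic} function of $x$ taking at most $|\Yc|$ values; hence the mutual information satisfies $I(x;\hat{x})=H(\hat{x})-H(\hat{x}\mid x)=H(\hat{x})\leq\log|\Yc|\leq R$. Consequently it suffices to prove that $\expect_x[\|x-\hat{x}\|^2]\geq D_*(R)$ for \emph{every} pair of jointly distributed random variables $(x,\hat{x})$, with $x$ as in the statement, satisfying $I(x;\hat{x})\leq R$, and then take the infimum over $(s,g)$. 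If $\expect_x[\|x-\hat{x}\|^2]=\infty$ the claim is trivial, so assume $D'\coloneqq\expect_x[\|x-\hat{x}\|^2]<\infty$.

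The core step is an upper bound on the conditional generalized entropy $h(x\mid\hat{x})$ via a Gaussian comparison density. For $\mu_{\hat{x}}$-almost every value $\hat{x}=a$, introduce the density $q_a(x)\coloneqq C_a\,e^{-\|x-a\|^2/(2\theta)}$ on $(\Xc,\Sigma_{\Xc},\Hc^m_{\Xc})$, where $\theta>0$ is a free parameter and $C_a$ normalizes $q_a$ to a probability density. Since $\mu_x\ll\Hc^m_{\Xc}$ implies $\mu_{x\mid\hat{x}=a}\ll\Hc^m_{\Xc}$, the generalized Gibbs inequality (non-negativity of the Kullback--Leibler divergence between $\mu_{x\mid\hat{x}=a}$ and the measure with $\Hc^m_{\Xc}$-density $q_a$) gives
\begin{equation*}
h(x\mid\hat{x}=a)\leq -\expect\!\left[\log q_a(x)\mid\hat{x}=a\right]=-\log C_a+\frac{\expect[\|x-a\|^2\mid\hat{x}=a]}{2\theta}.
\end{equation*}
The normalizer is controlled purely through the geometric constant $c_{\Xc}$ of \eqref{eq:c}: by the layer-cake formula,
\begin{equation*}
C_a^{-1}=\int_{\Xc}e^{-\|x-a\|^2/(2\theta)}\,\diff\Hc^m_{\Xc}(x)=\int_0^1\Hc^m_{\Xc}\!\big(B(a,\sqrt{-2\theta\ln t})\big)\,\diff t\leq c_{\Xc}(2\theta)^{m/2}\,\Gamma(1+m/2),
\end{equation*}
where the last step uses \eqref{eq:c} and the substitution $u=-\ln t$, together with $\int_0^\infty u^{m/2}e^{-u}\diff u=\Gamma(1+m/2)$. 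Taking expectation over $\hat{x}$ and minimizing the resulting bound over $\theta$ (the optimizer is $\theta=D'/m$) yields
\begin{equation*}
h(x\mid\hat{x})\leq \log\!\big(c_{\Xc}\,\Gamma(1+m/2)\big)+\frac{m}{2}\log\frac{2D'}{m}+\frac{m}{2}.
\end{equation*}

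Finally I would combine the pieces. Since $I(x;\hat{x})=h(x)-h(x\mid\hat{x})$ in this setting and $I(x;\hat{x})\leq R$,
\begin{equation*}
R\;\geq\; h(x)-\log\!\big(c_{\Xc}\,\Gamma(1+m/2)\big)-\frac{m}{2}\log\frac{2D'}{m}-\frac{m}{2},
\end{equation*}
and solving this inequality for $D'$ gives precisely $D'\geq \frac{m}{2}\big(e^{-R+h(x)-m/2}/(c_{\Xc}\,\Gamma(1+m/2))\big)^{2/m}=D_*(R)$; taking the infimum over all feasible $(s,g)$ completes the proof. As a sanity check, for $m=n$ and $\Hc^n_{\Xc}$ the Lebesgue measure one has $c_{\Xc}=v_n$ and $v_n\,\Gamma(1+n/2)=\pi^{n/2}$, so $D_*(R)$ collapses to the classical Shannon lower bound $\frac{n}{2\pi e}e^{2(h(x)-R)/n}$.

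I expect the main obstacle to be the measure-theoretic bookkeeping rather than the calculus: one must justify that mutual information and conditional generalized entropy are well-defined in this non-standard ambient setting, that the regular conditional measures $\mu_{x\mid\hat{x}=a}$ inherit absolute continuity with respect to $\Hc^m_{\Xc}$ for $\mu_{\hat{x}}$-a.e.\ $a$ (so that both the generalized Gibbs inequality and the decomposition $I(x;\hat{x})=h(x)-h(x\mid\hat{x})$ are valid), and that the layer-cake estimate holds uniformly over $a\in\real^n$ --- exactly what \eqref{eq:c} and \cite[Lemma 1]{riegler2018rate} guarantee. The structural reason the $m$-dimensional exponent appears, rather than the ambient $n$, is that the comparison density is supported on $\Xc$ itself, so its normalizer scales like $\delta^m$ and not $\delta^n$.
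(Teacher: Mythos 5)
Your proposal is correct, and it is worth noting that the paper itself does not prove Theorem~\ref{thm:slb} at all --- its ``proof'' is a one-line remark that the statement is the special case of \cite[Thm.~3.1]{riegler2023lossy} for finite unions of compact $C^1$-manifolds with Euclidean distortion. What you have written is a genuine, self-contained reconstruction of the argument behind that cited result, and every step checks out: the reduction $R\geq\log|\Yc|\geq H(\hat{x})=I(x;\hat{x})$ for a deterministic finite-valued encoder; the Gibbs-inequality bound on $h(x\mid\hat{x}=a)$ via the Gaussian comparison density on $(\Xc,\Hc^m_{\Xc})$; the layer-cake estimate $C_a^{-1}\leq c_{\Xc}(2\theta)^{m/2}\Gamma(1+m/2)$, which is exactly where the geometric constant $c_{\Xc}$ of \eqref{eq:c} enters and why the exponent is the intrinsic dimension $m$ rather than the ambient $n$; and the optimization $\theta=D'/m$ followed by inversion, which reproduces $D_*(R)$ exactly (your sanity check $c_{\Xc}\Gamma(1+n/2)=\pi^{n/2}$ recovering the classical bound $\frac{n}{2\pi e}e^{2(h(x)-R)/n}$ is also right). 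The measure-theoretic caveats you flag --- that $\mu_{x\mid\hat{x}=a}\ll\Hc^m_{\Xc}$ for a.e.\ $a$ and that $I(x;\hat{x})=h(x)-h(x\mid\hat{x})$ holds in this generalized setting --- are the genuinely delicate points, but for a deterministic encoder the conditional laws are just normalized restrictions of $\mu_x$ to the decision cells, so both follow from the chain rule for Radon--Nikodym derivatives; this is precisely the bookkeeping carried out in \cite{riegler2018rate,riegler2023lossy}. In short: where the paper defers to a citation, you supply the proof, and the proof is the right one.
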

\begin{proof}[Proof Sketch]
    This is the special case of \cite[Thm. 3.1]{riegler2023lossy} for (finite unions of) compact, $C^1$-manifolds and Euclidean distortion. 
\end{proof}

\subsection{Transition systems}
\begin{definition}[Transition system]
    A transition system $S$ is a tuple $S=(\Xc,\underset{S}{\rightarrow})$, where $\Xc$ is the state space and $\underset{S}{\rightarrow}\subseteq\Xc\times\Xc$ is a transition relation.
\end{definition}
A transition system $S=(\Xc,\underset{S}{\rightarrow})$ is \emph{deterministic} if, for any $x\in\Xc$, there exists at most one $x'\in\Xc$, such that $(x,x')\in\underset{S}{\rightarrow}$.  Given a transition system $S=(\Xc,\underset{S}{\rightarrow})$, its \emph{$l$-length behavior} $\Bc_l^S$ is defined as $\Bc_l^S :=\Big\{\xi: \ \xi = \{x_i\}_{i=0}^{l-1}, \ (x_{i},x_{i+1})\in\underset{S}{\rightarrow}, \ i=0,1,\dots,l-1\Big\}$. That is, the $l$-length behavior is the set of $l$-long trajectories. Notice that $\Bc_l^S\subseteq \Xc^l$. 

\section{Abstractions and the curse of dimensionality}\label{sec:abstractions}

\subsection{Finite abstractions of dynamical systems}

Throughout this work, we consider deterministic dynamical systems $x^+ = f(x)$, with $f:\Xc\to\Xc$. Dynamical systems obtain the transition-system representation $S=(\Xc, \underset{S}{\rightarrow})$, where $\underset{S}{\rightarrow} := \{(x,y):y=f(x), \ x\in\Xc\}$. We make the following assumption.
\begin{assumption}[The state space]\label{assum:X}
    The set $\Xc\subseteq\real^n$ is $n$-dimensional, connected and compact.
\end{assumption}
Under this assumption, $\Bc_l^S$ is an $n$-dimensional subset of $\Xc^l\subseteq\real^{nl}$.

Let us introduce \emph{abstractions} of dynamical systems.
\begin{definition}[Measurable Partition]
    Given a set $\Xc$, a finite collection of measurable, disjoint sets $\Yc=\{Y_i\}$, 
    such that $\bigcup_i Y_i \supseteq \Xc$, is a \emph{measurable partition} of $\Xc$.
\end{definition}
 
\begin{definition}[Abstraction]\label{def:abstraction}
    Given a dynamical system with transition-system representation $S=(\Xc, \underset{S}{\rightarrow})$ and a measurable partition $\Yc=\{Y_i\}$ of $\Xc$, a transition system $ A=(\Yc, \underset{A}{\rightarrow}\nolinebreak)$ is an abstraction of $S$ if, for any $x,x'\in\Xc$ and $Y,Y'\in \Yc$, such that $x\in Y$ and $x'\in Y'$, we have $(x,x')\in\underset{S}{\rightarrow}$ $\implies$ $(Y,Y')\in \underset{A}{\rightarrow}$.
\end{definition}
Although the dynamical system $S$ is deterministic, the abstraction $A$ is generally non-deterministic. With a slight abuse of formality, we often treat trajectories $\{\omega_i\}_{i=1}^l$ of the abstraction (with $\omega_i\in\Yc$) as subsets of $\Xc^l$, that is $\{\omega_i\}_{i=1}^l \equiv \omega_0\times \omega_1 \times \dots \times \omega_l \subseteq \Xc^l$.

\begin{theorem}[Behavioral inclusion {\cite[Theorem 4.18, simplified]{tabuada2009verification}}]\label{thm:inclusion}
    For a system $S=(\Xc,\underset{S}{\rightarrow})$, a partition $\Yc$ of $\Xc$ and an abstraction $A = (\Yc, \underset{A}{\rightarrow})$ of $S$, the following holds for any $l$: $\Bc_{l}^S\subseteq\Bc_{l}^A$.
\end{theorem}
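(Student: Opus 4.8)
The plan is to prove the inclusion $\Bc_l^S \subseteq \Bc_l^A$ by a direct argument on trajectories, induced by the defining property of an abstraction in Definition~\ref{def:abstraction}. Take an arbitrary element $\xi = \{x_i\}_{i=0}^{l-1} \in \Bc_l^S$; by definition of the $l$-length behavior, this means $(x_i, x_{i+1}) \in \underset{S}{\rightarrow}$ for all $i = 0, \dots, l-2$. I want to exhibit a trajectory of $A$ that "covers" $\xi$ in the sense required for membership in $\Bc_l^A$.

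First I would use the fact that $\Yc$ is a partition of $\Xc$ (in the sense of Definition of Measurable Partition, $\bigcup_i Y_i \supseteq \Xc$) to select, for each $i$, a cell $Y^{(i)} \in \Yc$ with $x_i \in Y^{(i)}$. If the $Y_i$ are genuinely disjoint this choice is unique; if one wants to be careful about boundary overlaps one simply fixes any measurable choice function, which does not affect the argument. Then I would apply Definition~\ref{def:abstraction} to each consecutive pair: since $x_i \in Y^{(i)}$, $x_{i+1} \in Y^{(i+1)}$, and $(x_i, x_{i+1}) \in \underset{S}{\rightarrow}$, the implication in the definition yields $(Y^{(i)}, Y^{(i+1)}) \in \underset{A}{\rightarrow}$ for every $i = 0, \dots, l-2$. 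Hence $\{Y^{(i)}\}_{i=0}^{l-1}$ is, by definition of $\Bc_l^A$, an element of $\Bc_l^A$.

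Finally I would reconcile this with the identification convention stated just before the theorem, under which an abstraction trajectory $\{\omega_i\}_{i=1}^l$ is regarded as the product set $\omega_0 \times \omega_1 \times \dots \times \omega_{l-1} \subseteq \Xc^l$, whereas an element of $\Bc_l^S$ is a point $(x_0, \dots, x_{l-1}) \in \Xc^l$. Under this reading the claimed inclusion means: for every $\xi \in \Bc_l^S$ there is $\omega \in \Bc_l^A$ with $\xi \in \omega$ (as a point in the product set). The construction above delivers exactly this, since $x_i \in Y^{(i)}$ for each $i$ implies $(x_0,\dots,x_{l-1}) \in Y^{(0)} \times \dots \times Y^{(l-1)}$. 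If one prefers the strict set-theoretic reading $\Bc_l^S \subseteq \Bc_l^A$ as a subset relation between subsets of $\Xc^l$, then $\Bc_l^S$ should first be understood as the union of the singletons of its trajectory points, and the same pointwise covering argument shows each such point lies in some product cell of $\Bc_l^A$.

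I do not expect any genuine obstacle here: the statement is essentially a restatement of the simulation/behavioral-inclusion property of abstractions (cf.~\cite[Theorem 4.18]{tabuada2009verification}), and the only point requiring a little care is bookkeeping between the "trajectory as tuple of points" viewpoint for $S$ and the "trajectory as product of cells" viewpoint for $A$, which the paper's abuse-of-notation convention already papers over. An induction on $l$ would package the same steps if one wants to emphasize that the covering extends one step at a time, but it is not needed.
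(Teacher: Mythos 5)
Your proof is correct and is the standard argument: the paper itself does not prove Theorem~\ref{thm:inclusion} but defers to the cited reference \cite[Theorem 4.18]{tabuada2009verification}, and your direct construction (pick the unique cell $Y^{(i)}\ni x_i$, apply Definition~\ref{def:abstraction} pairwise to get $(Y^{(i)},Y^{(i+1)})\in\underset{A}{\rightarrow}$, then invoke the product-set identification so that $\xi\in Y^{(0)}\times\cdots\times Y^{(l-1)}$) is exactly how one would fill in that omitted proof. Your care with the tuple-of-points versus product-of-cells reading of the inclusion is the only nontrivial bookkeeping, and you handle it correctly.
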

In fact, $\Bc_l^A$ is $nl$-dimensional, and \emph{covers} the $n$-dimensional set of system trajectories $\Bc_l^S$. This observation is instrumental in this work. Through behavioral inclusion, abstractions encode information about the infinite, continuous system behavior $\Bc_S^l$ into the finite abstraction behavior set $\Bc_{l}^A$. While this enables computational methods to verification problems for dynamical systems, it also generally entails information loss, as the following section explains.
\begin{remark}[Extension to (approximate) simulations]
    The results presented in this work straightforwardly extend to abstractions based on ($\varepsilon$-approximate) simulation relations (see \cite{tabuada2009verification}); for that, the abstraction state-space has to be a \emph{cover}, instead of a partition, of $\Xc$. The corresponding set-based abstraction is obtained by picking $\varepsilon$ balls centered at the states of the point-based $\varepsilon$-approximate abstractions.
\end{remark}
\subsection{Abstraction-based verification and information loss}
In typical verification problems, we are given a set of initial conditions $\Xi_0\subseteq \Xc$ for the system $S$ and we have to check if the corresponding set of system trajectories $\Xi=\{\xi \in \Bc_l^S: \ \xi_0\in \Xi_0\}$ satisfies a given property. For example, in the case of safety, we have to check if $\Xi\cap\Uc^l = \emptyset$, where $\Uc\subseteq\Xc$ is an unsafe set. Computing the exact reachable set $\Xi$ is generally impossible. Abstractions $A$ address this problem by computing the corresponding set of abstract state trajectories $\Omega_A = \bigcup\limits_{\omega\in \Bc_l^A, \ \omega_0 \cap \Xi_0 \neq \emptyset}\omega$, which is tractable, as the abstraction is finite. Notice that, by behavioral inclusion, we have $\Xi\subseteq \Omega_A$. Finally, for safety verification, if $\Omega_A\cap \Uc^l=\emptyset$, then one may safely deduce that the system is safe. 

As abstractions group system states $x\in\Xc$ in sets $Y\subseteq\Xc$, \emph{information loss} is inevitable. In general, the partition $\Yc$ needs to have a relatively high resolution, to recover a meaningful verification answer. E.g., in the extreme case of $|\Yc|=1$, for any set of initial conditions $\Xi_0\subseteq \Xc$, the abstraction returns $\Omega_A=\Xc^l$, i.e. the whole ambient space of $l$-length trajectories. As such, for small $|\Yc|$, the abstraction $A$ does not accurately represent the system $S$. On the other hand, for large $|\Yc|$, where the abstraction is more accurate, the computations on the abstraction become heavier -- even intractable. Thus, there is a trade-off between abstraction accuracy and partition size $|\Yc|$. In what follows, we provide a statistical, quantitative theory of the size-accuracy tradeoff, based on rate-distortion theory, and provide bounds on the size-accuracy tradeoff. 

\section{Information-theoretic framework for finite abstractions}
\begin{figure*}
    \centering
    \includegraphics[width=1\linewidth]{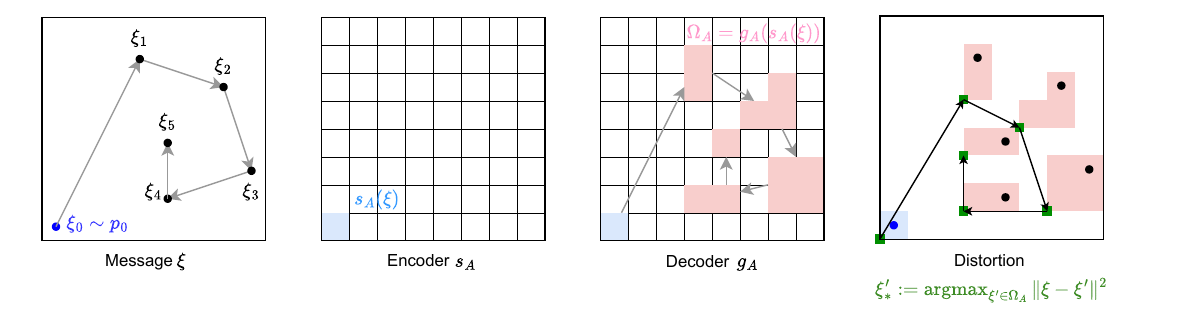}
    \caption{Abstractions as a source coding scheme. From the left: 1) a sample trajectory $\xi$ of system $S$ with its initial state $\xi_0$ highlighted in blue; 2) the state-space partition $\Yc$, and the corresponding abstract initial condition in cyan; 3) the set of abstract trajectories $\Omega_A$, in red; 4) the specific trajectory $\xi'_*\in\Omega_A$ that deviates the most from the true system trajectory $\xi$.}
    \label{fig:abstractions_as_rsc}
\end{figure*}
In what follows, consider the dynamical system $x^+=f(x)$, with $x\in \Xc\subseteq\real^n$, under Assumption \ref{assum:X}. The dynamical system admits the transition system representation $S=(\Xc, \underset{S}{\rightarrow})$. Towards deriving a statistical quantification on the size-accuracy tradeoff of abstractions, we impose a probability distribution $p_{\xi_0}:\Xc\to\real_+$ on the system's initial conditions. Verification, then, becomes: sampling an initial condition $\xi_0\in \Xc$, with $\xi_0\sim p_{\xi_0}$, and afterward employing the abstraction to give a verification answer.\footnote{To be mathematically precise, $\xi_0$ is a random variable over $(\Xc,\Sigma_{\Xc}, \Lc^n_{\Xc})$, where $\Lc^n$ is the $n$-dimensional Lebesgue measure, with probability measure $\mu_{\xi_0}$ such that $\frac{\diff \mu_{\xi_0}}{\diff \Lc^n_{\Xc}}=p_{\xi_0}$.}

Let us show how an abstraction $A$ can be viewed as an encoder-decoder pair of system trajectories $\xi \in \Bc_l^S$. For the following, we refer the reader to Figure \ref{fig:abstractions_as_rsc}. The system (source) samples an initial condition $\xi_0\sim p_{\xi_0}$ and generates the trajectory $\xi=(\xi_0,\dots,\xi_{l-1})\in \Bc_l^S$ (the message). The encoder $s_A:\Bc_l^S\to\Yc$ looks at the initial condition $\xi_0$ and returns the corresponding abstract initial condition:\footnote{Abstractions only use (sets of) initial conditions, for verification, as explained in Section \ref{sec:abstractions}. Nonetheless, from an information-theoretic perspective, $\xi_0$ and $\xi$ are equivalent, as $\xi_0\mapsto \xi$ is one-to-one, for given dynamics $f$; that is, $\xi_0$ and $\xi$ carry the exact same information.}
\begin{equation}\label{eq:abstraction_encoder}
    s_A(\xi):= Y, \ \text{s.t.} \ \xi_0\in Y
\end{equation}
The decoder $g_A$, upon receiving the initial condition $\omega_{A_0}=s_A(\xi)$, outputs the set of all abstract state trajectories corresponding to $\omega_{A_0}$. That is, for the decoder we have $g_A:\Yc\to 2^{\Xc^l}$ with
\begin{equation}\label{eq:abstraction_decoder}
    g_A(y) :=\bigcup\limits_{\omega\in \Bc_l^A, \ \omega_0=y}\hspace{-3mm}\omega.
\end{equation}
The rate, determined by the encoder's size, is $\log(|\Yc|)$. Indeed, notice that \emph{the abstraction encodes the system's trajectories $\Bc_l^S$ into exactly $|\Yc|$ outcomes, that is $\{g_A(z): \ z\in \Yc\}$.} 

To capture the accuracy of the abstraction, and compare the message $\xi$ and output $\Omega_A = g_A(s_A(\xi))$, we employ a distortion function $d:\Bc_l^S\times2^{\Xc^l}\to \real^+$
defined by
\begin{equation}\label{eq:abstraction_distortion_function}
    d(\xi,\Omega_A):= \sup_{ \xi' \in \Omega_A} \frac{1}{l}\|\xi-\xi'\|^2.
\end{equation}
In words, $d(\xi,\Omega_A)$ is the worst possible distortion between system trajectories $\xi$ and abstract trajectories $\Omega_A$, averaged over the time horizon $l$. This is in-line with abstraction-based verification, where the worst-case outcome is considered.

Let us now explain what ``expected (or average) distortion", for a given abstraction $A$, means in the context of verification. The expected distortion $\expect_{\xi_0}[d(\xi,\Omega_A)\mid A]$ is taken w.r.t. the initial-condition distribution $p_{\xi_0}$. Thus, for $N\to\infty$ verification problems, where the initial condition $\xi_0\sim p_{\xi_0}$, $\expect_{\xi_0}[d(\xi,\Omega_A)\mid A]$ is the average distortion. As $d$ measures the distance between system trajectories and abstract state trajectories, the expected distortion $\expect_{\xi_0}[d(\xi,\Omega_A)\mid A]$ is thus the \emph{spatial}, statistical average of the deviation between system trajectories and abstract state trajectories, over initial conditions in $\Xc$ with distribution $p_{\xi_0}$. 

\begin{remark}[Initial-condition distribution]
The distribution $p_{\xi_0}$ weighs how much each initial condition $\xi_0\in\Xc$ contributes to the average distortion $\expect_{\xi_0} [d(\xi,\Omega_A)\mid A]$. Arguably, the most suitable choice for $p_{\xi_0}$ is the uniform distribution, as, when constructing an abstraction, the initial condition is unknown and all initial conditions are considered equally likely.
\end{remark}

Finally, the optimal abstraction size-accuracy tradeoff is captured by the following rate-distortion quantity:
\begin{align*}
    D_{abs}(R) := \inf_{A}& \ \expect_{\xi_0} [d(\xi,\Omega_A)\mid A]\\
    \mathrm{s.t.}& \ A\text{ is an abstraction of }S,\\
    &\ \text{\eqref{eq:abstraction_encoder}, \eqref{eq:abstraction_decoder} hold}, \\
    &\ \log|\Yc|\leq R, \ \Omega_A = g_A(s_A(\xi)).
\end{align*}
That is, the minimum average deviation of abstract state trajectories and system trajectories, over all abstractions with a given upper-bound $e^R$ on partition size. Likewise, we also consider the inverse $R_{abs}(D)$, which is the ($\log$ of the) minimum partition size for a given upper threshold $D$ on the average deviation of abstract state trajectories and system ones.

\begin{remark}[Statistics of abstractions' accuracy and size] The proposed theory does not aim at providing (probabilistic) guarantees on the correctness of abstractions. These are a-priori provided by Definition \ref{def:abstraction}, through behavioral inclusion or related properties. Instead, the theory developed here provides (guarantees on the) statistical quantification of abstractions' accuracy and size.
\end{remark}
\begin{remark}[The message space is $\Bc_l^S$]
    Even though we have reduced everything thus far to the initial condition distribution $p_{\xi_0}$, the message space is $\Bc_l^S$, i.e. the system trajectories. Indeed, although the expectation $\expect [d(\xi,\Omega_A)\mid A]$ can be taken either w.r.t. $\xi_0\sim p_{\xi_0}$ or w.r.t. the random variable $\xi\in \Bc_l^S$ (as $\xi_0\mapsto \xi$ is one-to-one), the distortion $d$ considers the whole $\xi\in\Bc_l^S$. As such, in the coming section, to derive bounds on $D_{abs}(R)$ and $R_{abs}(D)$, employing the theory presented in Section \ref{sec:rd}, we reason about the random variable $\xi\in\Bc_l^S$ and its associated probability measure $\mu_{\xi}$ over $(\Bc_l^S, \Sigma_{\Bc_l^S}, \Hc^n_{\Bc_l^S})$, which is solely determined by the initial condition distribution $p_{\xi_0}:\Xc\to\real_+$ and the system dynamics $f:\Xc\to\Xc$. Hence, we take expectations $\expect_{\xi}$ and $\expect_{\xi_0}$ interchangeably.
\end{remark}

\section{Rate-distortion theory and a fundamental limit for abstractions}\label{sec:main}

\subsection{A fundamental limit on abstracting dynamical systems}

Having modeled the statistics of abstraction-based verification as a source coding problem, we now proceed to probing the fundamental limits of the abstraction size-accuracy tradeoff, by providing lower bounds on $R_{abs}(D)$ and $D_{abs}(R)$.

Note that abstractions, given the message, output sets and the associated distortion \eqref{eq:abstraction_distortion_function} is set-based. This is in contrast to typical encoder-decoder pairs considered in Thm.~\ref{thm:slb}, which output points and the distortion function is the Euclidean distance. Thus, the results from Section \ref{sec:rd} do not straightforwardly apply, to derive bounds on $D_{abs}$ and $R_{abs}$. In what follows, we derive said bounds, both employing Thm.~\ref{thm:slb} and quantifying the aforementioned distortion disparity. This enables a rate-distortion theory for abstractions. First, we present an intermediate, purely geometric result, providing a lower bound on the average distortion of a given abstraction.
\begin{proposition}[Abstraction vs. encoder distortion]\label{prop:abstraction vs encoder distortion}
    Consider a dynamical system $x^+=f(x)$ with transition system representation $S=(\Xc,\underset{S}{\rightarrow})$, and let Assumption \ref{assum:X} hold. Let $\xi\in \Bc_l^S$ be a trajectory of $S$, with $\xi_0\sim p_{\xi_0}:\Xc\to\real_+$. Consider a measurable partition $\Yc$ of $\Xc$ and an associated abstraction $A$, and let $\Omega_A = g_A(s_A(\xi))$, where $s,g$ are given by \eqref{eq:abstraction_encoder} and \eqref{eq:abstraction_decoder}. Consider an encoder-decoder pair $(s_{q_A},g_{q_A})$, where $s_{q_A}(\xi) = g_A(s_A(\xi))$ and $g_{q_A}(z) = x_c(z)$, where $x_c(z):= \argmin_{y}\max_{y'\in z}\|y-y'\|^2$ is the Chebyshev center of the set $z$. Denote the Chebyshev radius of set $z$, by $r_c(z):=\min_{y}\max_{y'\in z}\|y-y'\|^2$. Let $\xi_{q_A} = g_{q_A}(s_{q_A}(\xi))$. The following lower bound holds for the average distortion of the abstraction:
    \begin{equation}\label{eq:prop abstraction vs encoder}
        \expect_{\xi_0}[d(\xi,\Omega_A)] \geq \frac{1}{l}\expect_{\xi_0}[\|\xi-\xi_{q_A}\|^2] + \frac{1}{l}\expect_{\xi_0}[r_c^2(\Omega_A)],
    \end{equation}
    where $d$ is the distortion function in \eqref{eq:abstraction_distortion_function}.
\end{proposition}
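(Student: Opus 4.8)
The plan is to prove the inequality pointwise for each realization of $\xi_0$ (equivalently, of the trajectory $\xi$ and of the decoded set $\Omega_A = g_A(s_A(\xi))$) and then integrate against $p_{\xi_0}$. For a fixed abstraction $A$, the set $\Omega_A$ takes only finitely many values as $\xi_0$ ranges over $\Xc$ (one per abstract initial condition $y\in\Yc$), so the maps $\xi_0\mapsto\Omega_A$, $\xi_0\mapsto\xi_{q_A}=x_c(\Omega_A)$ and $\xi_0\mapsto r_c^2(\Omega_A)$ are simple functions of $\xi_0$, hence measurable; taking $\expect_{\xi_0}[\cdot]$ of a pointwise inequality and using linearity of expectation then yields \eqref{eq:prop abstraction vs encoder}. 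So everything reduces to a deterministic, purely geometric claim about Chebyshev balls.

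The geometric core is: let $K\subset\real^{nl}$ be a nonempty compact set, $c$ its Chebyshev center, and $\rho\coloneqq\max_{p\in K}\|p-c\|^2$ its squared Chebyshev radius; then for every $x\in\real^{nl}$,
\begin{equation}\label{eq:cheby_lemma}
    \max_{p\in K}\|x-p\|^2 \;\geq\; \|x-c\|^2 + \rho .
\end{equation}
To prove \eqref{eq:cheby_lemma}, I would invoke the classical characterization of the minimal enclosing ball: the Chebyshev center $c$ lies in the convex hull of the contact set $\{p\in K:\|p-c\|^2=\rho\}$, i.e.\ there exist contact points $p_1,\dots,p_k$ and weights $\lambda_i\geq 0$ with $\sum_i\lambda_i=1$ and $\sum_i\lambda_i(p_i-c)=0$. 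Fixing $x$, we get $\sum_i\lambda_i\langle x-c,\,p_i-c\rangle=\langle x-c,\,\sum_i\lambda_i(p_i-c)\rangle=0$, so some index $i^\star$ satisfies $\langle x-c,\,p_{i^\star}-c\rangle\leq 0$. Expanding,
\begin{equation*}
    \|x-p_{i^\star}\|^2 = \|x-c\|^2 - 2\langle x-c,\,p_{i^\star}-c\rangle + \|p_{i^\star}-c\|^2 \geq \|x-c\|^2 + \rho,
\end{equation*}
and since $p_{i^\star}\in K$, \eqref{eq:cheby_lemma} follows.

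To apply this, I fix $\xi_0$ and take $K=\overline{\Omega_A}$, the closure of $\Omega_A$, which is a compact subset of $\Xc^l$. Passing to the closure changes none of the relevant quantities: $\sup_{\xi'\in\Omega_A}\|\xi-\xi'\|^2=\max_{\xi'\in\overline{\Omega_A}}\|\xi-\xi'\|^2$ because $\|\xi-\cdot\|^2$ is continuous, and likewise $x_c(\overline{\Omega_A})=x_c(\Omega_A)=\xi_{q_A}$ and $r_c(\overline{\Omega_A})=r_c(\Omega_A)$. Hence \eqref{eq:cheby_lemma} with $x=\xi$, $c=\xi_{q_A}$, $\rho=r_c^2(\Omega_A)$ gives the pointwise bound
\begin{equation*}
    l\cdot d(\xi,\Omega_A) = \sup_{\xi'\in\Omega_A}\|\xi-\xi'\|^2 \geq \|\xi-\xi_{q_A}\|^2 + r_c^2(\Omega_A),
\end{equation*}
and dividing by $l$ and taking $\expect_{\xi_0}$ of both sides yields \eqref{eq:prop abstraction vs encoder}.

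The main obstacle is the geometric lemma \eqref{eq:cheby_lemma}, specifically invoking (and, if one wants a self-contained argument, reproving) the convex-hull characterization of the Chebyshev center: this rests on a separating-hyperplane argument, namely that if $c$ were not in the convex hull of the contact points, then perturbing $c$ toward that hull would strictly shrink the enclosing ball, contradicting minimality; one also uses that the Chebyshev center is unique (if $B(c_1,r)$ and $B(c_2,r)$ both contain $K$ with $c_1\neq c_2$, then the midpoint gives a ball of radius $\sqrt{r^2-\tfrac14\|c_1-c_2\|^2}<r$ containing $K$). A secondary, bookkeeping-level point is the passage to closures, needed only because a measurable partition need not have closed cells, so the $\max$ in the definitions of $x_c$ and $r_c$ is a priori a $\sup$; this is harmless by continuity and compactness of $\overline{\Omega_A}$. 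The degenerate case where $\Omega_A$ is a singleton (then $\rho=0$ and the bound is the trivial identity $l\,d(\xi,\Omega_A)=\|\xi-\xi_{q_A}\|^2$) requires no special treatment.
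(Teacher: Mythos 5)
Your proof is correct and follows essentially the same route as the paper's: both reduce to the pointwise inequality $\sup_{\xi'\in\Omega_A}\|\xi-\xi'\|^2\geq\|\xi-x_c(\Omega_A)\|^2+r_c^2(\Omega_A)$ and both obtain it from the fact that the Chebyshev center lies in the convex hull of the contact points, which kills the cross term (the paper derives this via the subdifferential optimality condition for the convex max function, you invoke the classical minimal-enclosing-ball characterization and then select a single contact point with nonpositive inner product rather than averaging over all of them). Your remarks on measurability and on passing to closures are harmless refinements of details the paper leaves implicit.
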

Prop. \ref{prop:abstraction vs encoder distortion} suggests that the average distortion of an abstraction is lower bounded by the expected distortion of a particular encoder-decoder pair (the one outputting the Chebyshev centers of the abstractions outputs) plus a term depending on the size of the abstraction's outputs. Employing Prop. \ref{prop:abstraction vs encoder distortion}, in Theorem \ref{thm:slb_abstractions} below, we derive a fundamental lower bound on $D_{abs}(R)$, by lower-bounding each of the two terms in the right-hand side of \eqref{eq:prop abstraction vs encoder} separately, over all abstractions with the same rate. The first term in the right-hand side of \eqref{eq:prop abstraction vs encoder} can be lower bounded as in Thm.~\ref{thm:slb}, being the expected distortion of an encoder-decoder pair with the same rate as the abstraction. To bound the second term, we observe that the abstraction's outputs $\Omega_A$ define an $nl$-dimensional cover\footnote{This cover is precisely $\Zc:=\{Z: Z = g_A(s_A(x_0)), x_0\in \Xc\}$ and note that $s_A(x)$ takes values in the set $|\Yc|$. Thus $|\Zc|=|\Yc|$.} of $\Bc_l^S$, and the cover's size is equal to the abstraction's size; the bound is then obtained by lower-bounding over all possible $nl$-dimensional covers of $\Bc_l^S$, using geometric measure theory (see Lemma \ref{lem:sphere_packing}).
For an illustrative example of the above, see Fig. \ref{fig:cheby}.
\begin{figure}[htb]
	    \centering
			\includegraphics[width = 0.9\linewidth]{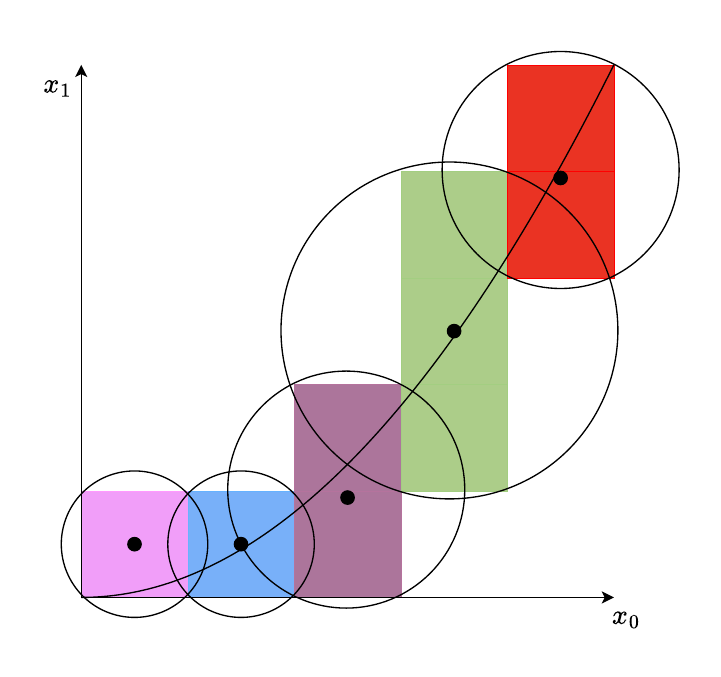}
            \caption{Consider the dynamical system $x^+ = x^2$ with state-space $\Xc=[0,1]$. The parabola depicts the set of trajectories $\Bc_2^S$, embedded in $[0,1]^2$. Consider an abstraction $A$ with associated partition sets $Y_i = [0.2(i-1), \,0.2i)$ for $i=1,\dots,4$ and $Y_5=[0.8,1]$. The abstraction transitions, thus, are $\underset{A}{\rightarrow}=\{(Y_1,Y_1),(Y_2,Y_1), (Y_3,Y_1), (Y_3,Y_2), (Y_4,Y_2), (Y_4,Y_3),\allowbreak (Y_4,Y_4), (Y_5, Y_4), (Y_5,Y_5)\}$. The colored rectangles represent the abstraction outputs $\Omega_A$, depending on the initial condition $\xi_0$. For example, if $\xi_0\in Y_4$, then the abstraction output $\Omega_A$ is the green rectangle $\{(x_0,x_1):\, x_0\in Y_4, x_1 \in Y_3\cup Y_4\cup Y_5\}$. Observe how the abstraction's outputs define a $2$-dimensional cover of the curve $\Bc_2^S$. The dots represent the Chebyshev centers for each different abstraction output, and the circles are the corresponding Chebyshev balls. E.g, when $\xi_0\in Y_4$, we have $x_c(\Omega_A) = (.5\, \, .5)$ and $r_c(\Omega_A)=0.3\sqrt2$. The abstraction's expected distortion is lower bounded as per \eqref{eq:prop abstraction vs encoder}.
            }
            \label{fig:cheby}
\end{figure}
\begin{theorem}[Shannon lower bound for abstractions]\label{thm:slb_abstractions}
    Consider a dynamical system $x^+=f(x)$ with transition-system representation $S=(\Xc,\underset{S}{\rightarrow})$, and let Assumption \ref{assum:X} hold. Let $\xi\in \Bc_l^S$ be a trajectory of $S$, with $\xi_0\sim p_{\xi_0}:\Xc\to\real_+$. Assume that: 
    \begin{enumerate}
        \item $\Bc^S_l$ is a finite union of bounded, $n$-dimensional $C^1$-manifolds,
        \item $\mu_{\xi}\ll\Hc^n_{\Bc_l^S}$, with $p_\xi\coloneqq \frac{\diff \mu_{\xi}}{\diff \Hc^n_{\Bc_l^S}}$.
    \end{enumerate}
     The average distortion of any abstraction $A$ with partition size $|\Yc|\leq e^R$, where $R>0$, is lower bounded as follows 
    \begin{equation}\label{eq:slb_abstraction_distortion}
    \begin{aligned}
        D_{abs}(R)  \geq\, &\frac{n}{2l}\Big(\frac{e^{-R+h(\xi)-n/2}}{c_{\Bc_l^S}\Gamma(1+n/2)}\Big)^{2/n} \\&+\frac{1}{l} c_{\Bc_l^S}^{-2/n} \max_{s\in (1,\infty]} e^{\frac{2}{n}(-\frac{s}{s-1}R+h_{s}(\xi))},
    \end{aligned}
    \end{equation}
    where $c_{\Bc_l^S}$ is defined by \eqref{eq:c}.
\end{theorem}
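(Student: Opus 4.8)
The plan is to prove \eqref{eq:slb_abstraction_distortion} by bounding, \emph{separately and uniformly over all abstractions with $|\Yc|\le e^R$}, the two summands on the right-hand side of \eqref{eq:prop abstraction vs encoder} from Proposition~\ref{prop:abstraction vs encoder distortion}: the Euclidean encoder term $\frac{1}{l}\expect_{\xi_0}[\|\xi-\xi_{q_A}\|^2]$ will be bounded below using Theorem~\ref{thm:slb}, and the Chebyshev-radius term $\frac{1}{l}\expect_{\xi_0}[r_c(\Omega_A)]$ using a covering argument from geometric measure theory (Lemma~\ref{lem:sphere_packing}). Since the infimum of a sum dominates the sum of the infima, combining these two bounds -- each of which depends only on $R$ and the system, not on $A$ -- immediately yields the claimed lower bound on $D_{abs}(R)$. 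Throughout I view $\xi$ as a random variable on $(\Bc_l^S,\Sigma_{\Bc_l^S},\Hc^n_{\Bc_l^S})$ with density $p_\xi$; under Assumption~\ref{assum:X} and the stated hypotheses $\Bc_l^S$ is compact (it is bounded by hypothesis and closed, being the continuous image of the compact set $\Xc$ under $x_0\mapsto(x_0,f(x_0),\dots,f^{l-1}(x_0))$), hence a finite union of compact $n$-dimensional $C^1$-manifolds embedded in $\real^{nl}$, as required by Theorem~\ref{thm:slb}.

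\emph{First summand.} By construction in Proposition~\ref{prop:abstraction vs encoder distortion}, $(s_{q_A},g_{q_A})$ is a (deterministic) encoder--decoder pair whose encoder takes values in the finite set $\Zc=\{g_A(Y):Y\in\Yc\}$, so $\log|\Zc|\le\log|\Yc|\le R$, and whose distortion is $\|\cdot\|^2$. Hence by the definition of $D(R)$ for source coding of $\xi$, and then by Theorem~\ref{thm:slb} applied with ambient space $\real^{nl}$, manifold dimension $n$, packing constant $c_{\Bc_l^S}$ and entropy $h(\xi)$, one gets $\expect_{\xi_0}[\|\xi-\xi_{q_A}\|^2]\ge D(R)\ge D_*(R)=\frac{n}{2}\big(e^{-R+h(\xi)-n/2}/(c_{\Bc_l^S}\Gamma(1+n/2))\big)^{2/n}$; dividing by $l$ reproduces exactly the first summand of \eqref{eq:slb_abstraction_distortion}.

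\emph{Second summand} -- this is where I expect the real work to be, and it is the content of Lemma~\ref{lem:sphere_packing}. By behavioral inclusion (Theorem~\ref{thm:inclusion}), any trajectory $\xi$ with $\xi_0\in Y$ lies in $g_A(Y)=\Omega_A$, hence in the closed ball of (squared) radius $\rho_Y^2:=r_c(g_A(Y))$ about $x_c(g_A(Y))$; so the sets $P_Y:=\{\xi\in\Bc_l^S:\xi_0\in Y\}$ form a $\mu_\xi$-partition of $\Bc_l^S$, each contained in that ball. Writing $p_Y:=\mu_\xi(P_Y)$ we have $\sum_Y p_Y=1$, at most $|\Yc|\le e^R$ nonzero $p_Y$'s, and $\expect_{\xi_0}[r_c(\Omega_A)]=\sum_Y p_Y\rho_Y^2$. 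The key estimate bounds the mass of a ball: applying H\"older's inequality with conjugate exponents $(s,\tfrac{s}{s-1})$ to $\mu_\xi(B)=\int_B p_\xi\,\diff\Hc^n_{\Bc_l^S}$, together with \eqref{eq:c} in the form $\Hc^n_{\Bc_l^S}(B(\cdot,\delta))\le c_{\Bc_l^S}\delta^n$ and the identity $\expect_\xi[p_\xi^{\,s-1}]=e^{(1-s)h_s(\xi)}$ (with $\|p_\xi\|_\infty=e^{-h_\infty(\xi)}$ for $s=\infty$), gives $p_Y\le e^{-\frac{s-1}{s}h_s(\xi)}\,(c_{\Bc_l^S}\rho_Y^{\,n})^{\frac{s-1}{s}}$, equivalently $\rho_Y^2\ge c_{\Bc_l^S}^{-2/n}e^{\frac{2}{n}h_s(\xi)}\,p_Y^{\frac{2s}{n(s-1)}}$. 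Substituting into $\sum_Y p_Y\rho_Y^2$ and invoking the power-mean inequality $\sum_Y p_Y^{1+\gamma}\ge K^{-\gamma}\ge e^{-\gamma R}$ (where $\gamma=\tfrac{2s}{n(s-1)}$ and $K\le e^R$ is the number of cells with $p_Y>0$, using $\sum_Y p_Y=1$) yields $\expect_{\xi_0}[r_c(\Omega_A)]\ge c_{\Bc_l^S}^{-2/n}e^{\frac{2}{n}(h_s(\xi)-\frac{s}{s-1}R)}$; taking the maximum over $s\in(1,\infty]$ and dividing by $l$ reproduces the second summand of \eqref{eq:slb_abstraction_distortion}.

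Summing the two bounds and taking the infimum over abstractions with $|\Yc|\le e^R$ proves \eqref{eq:slb_abstraction_distortion}. The routine points to discharge are: verifying that $(s_{q_A},g_{q_A})$ is admissible in Theorem~\ref{thm:slb} (finite encoder image, measurable $\hat\Xc=\real^{nl}$); passing from open balls in \eqref{eq:c} to closed balls by a limit in $\delta$; and the degenerate regime where $p_\xi\notin L^s$, in which $h_s(\xi)=-\infty$ so the $s$-term in the maximum vanishes and is harmless. I anticipate the main obstacle to be the second summand: converting the intuition ``a bounded number of balls cannot carry unit $\mu_\xi$-mass unless some are large'' into the sharp exponent $2/n$ -- which is exactly what the pairing of H\"older's inequality (producing the R\'enyi entropy $h_s$) with the power-mean inequality (producing the rate penalty $\frac{s}{s-1}R$) and the final optimization over $s$ accomplishes.
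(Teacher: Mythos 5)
Your proposal is correct and follows essentially the same route as the paper: decompose via Proposition~\ref{prop:abstraction vs encoder distortion}, bound the Chebyshev-center encoder term by Theorem~\ref{thm:slb}, and bound the radius term by the H\"older/power-mean covering argument that is exactly the content (and proof) of Lemma~\ref{lem:sphere_packing}. The only blemishes are notational (writing $\expect_{\xi_0}[r_c(\Omega_A)]$ and $\rho_Y^2:=r_c(g_A(Y))$ where the squared radius is meant), which do not affect the argument.
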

Notice that a valid lower bound on $D_{abs}(R)$ is obtained for any value of $s\in(1,\infty]$ in the right-hand side of \eqref{eq:slb_abstraction_distortion}; maximization over $s$ provides the tightest bound. In the numerical examples in Section \ref{sec:numerics}, we compute the bound for multiple values of $s$. Further, one may recover a lower bound on $R_{abs}(D)$ numerically, by fixing $D$ in the left-hand side of \eqref{eq:slb_abstraction_distortion} and solving numerically for $R$ (as the right-hand side is a decreasing function of $R$, this is trivially computed by, e.g., bisection methods). Nonetheless, by specifically picking $s=\infty$, one may analytically invert the bound \eqref{eq:slb_abstraction_distortion} to obtain a lower bound on $R_{abs}(D)$, as the following corollary shows.
\begin{corollary}[to Thm. \ref{thm:slb_abstractions}]\label{cor:slb_abstractions}
    Under the assumptions of Thm.~\ref{thm:slb_abstractions}, the minimum abstraction size, for an upper threshold $D$ on abstraction distortion, is lower bounded as:
    \begin{equation}\label{eq:slb_abstraction_rate}
    \begin{aligned}
        R_{abs}(D)  \geq & \frac{n}{2}\log\!\Big(\frac{n}{2lc_{\Bc_l^S}^{2/n}}\frac{e^{\frac{2}{n}h(\xi)-1}}{\Gamma(1+n/2)^{2/n}} + \frac{1}{lc_{\Bc_l^S}^{2/n}}e^{\frac{2}{n}h_\infty(\xi)}\Big) \\&-\frac{n}{2}\log(D).
    \end{aligned}
    \end{equation}
\end{corollary}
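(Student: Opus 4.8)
The plan is to derive \eqref{eq:slb_abstraction_rate} as a direct algebraic consequence of \eqref{eq:slb_abstraction_distortion} by committing to the endpoint $s=\infty$ in Theorem~\ref{thm:slb_abstractions} and then inverting the resulting bound. First I would note that, since maximizing over $s\in(1,\infty]$ in \eqref{eq:slb_abstraction_distortion} only enlarges the right-hand side, every \emph{fixed} admissible $s$ already yields a valid lower bound on $D_{abs}(R)$; choosing $s=\infty$ makes $\tfrac{s}{s-1}\to 1$, so the second summand collapses to $\tfrac{1}{l}\,c_{\Bc_l^S}^{-2/n}\,e^{\frac{2}{n}(h_\infty(\xi)-R)}$. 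Rewriting the first summand as $\tfrac{n}{2l}\,c_{\Bc_l^S}^{-2/n}\,\Gamma(1+n/2)^{-2/n}\,e^{\frac{2}{n}h(\xi)-1}\,e^{-\frac{2}{n}R}$ exhibits the factor $e^{-\frac{2}{n}R}$ in both terms. Collecting it, the bound takes the form $D_{abs}(R)\ge K\,e^{-\frac{2}{n}R}$, where $K$ is the resulting constant; a short computation shows that $\tfrac{n}{2}\log K$ equals the first term on the right-hand side of \eqref{eq:slb_abstraction_rate}, so this step is just bookkeeping of the prefactors $e^{-1}$, $\Gamma(1+n/2)^{2/n}$, $c_{\Bc_l^S}^{2/n}$ and $1/l$.

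Next I would convert this distortion bound into the claimed rate bound through the definition of $R_{abs}(D)$, that is, the infimum of $\log|\Yc|$ over all abstractions $A$ (with partition $\Yc$) whose average distortion $\expect_{\xi_0}[d(\xi,\Omega_A)\mid A]$ does not exceed $D$. The crucial observation is that a partition has an integer number of cells, so $|\Yc|$ ranges over $\{1,2,\dots\}$ and this infimum is taken over a discrete set; hence, whenever $R_{abs}(D)<\infty$, it is \emph{attained} by some abstraction $A^*$ with $\log|\Yc_{A^*}|=R_{abs}(D)$ and $\expect_{\xi_0}[d(\xi,\Omega_{A^*})\mid A^*]\le D$. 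Since $A^*$ is feasible for the rate constraint $\log|\Yc|\le R_{abs}(D)$ defining $D_{abs}(R_{abs}(D))$, we obtain $D_{abs}(R_{abs}(D))\le D$. Evaluating the bound of the previous paragraph at $R=R_{abs}(D)$ then gives $K\,e^{-\frac{2}{n}R_{abs}(D)}\le D_{abs}(R_{abs}(D))\le D$; taking logarithms and solving for $R_{abs}(D)$ yields $R_{abs}(D)\ge \tfrac{n}{2}\log K-\tfrac{n}{2}\log D$, which is \eqref{eq:slb_abstraction_rate}. If $R_{abs}(D)=\infty$ the inequality is trivial.

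Since the computation is routine, I expect the only point deserving care to be the inversion step, namely the claim $D_{abs}(R_{abs}(D))\le D$. This is precisely where the discreteness of $|\Yc|$ helps: it guarantees that the infimum defining $R_{abs}(D)$ is achieved, sidestepping any lower-semicontinuity or limiting argument one would otherwise need. Should one prefer not to invoke discreteness, an alternative is to use that $D_{abs}$ is non-increasing in $R$ -- any abstraction admissible at rate $R$ is also admissible at every rate $R'\ge R$ -- so that $K\,e^{-\frac{2}{n}R}\le D_{abs}(R)\le D$ for all $R>R_{abs}(D)$, and then let $R\downarrow R_{abs}(D)$. Apart from that, the proof only requires verifying that the constant $K$ obtained by factoring out $e^{-\frac{2}{n}R}$ coincides with the argument of the logarithm in \eqref{eq:slb_abstraction_rate}.
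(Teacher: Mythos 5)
Your proposal follows exactly the route of the paper's (one-line) proof sketch: set $s=\infty$ in \eqref{eq:slb_abstraction_distortion}, factor out $e^{-\frac{2}{n}R}$, and invert; your additional care in justifying $D_{abs}(R_{abs}(D))\le D$ via the discreteness of $|\Yc|$ (or monotonicity of $D_{abs}$) is a welcome tightening of a step the paper leaves implicit. One caveat on your final ``bookkeeping'' claim: carrying the prefactors through, the constant you actually obtain is $K=\frac{n}{2lc_{\Bc_l^S}^{2/n}}\frac{e^{\frac{2}{n}h(\xi)-1}}{\Gamma(1+n/2)^{2/n}}+\frac{1}{lc_{\Bc_l^S}^{2/n}}e^{\frac{2}{n}h_\infty(\xi)}$, i.e.\ the factor $\frac{1}{lc_{\Bc_l^S}^{2/n}}$ should also multiply the $e^{\frac{2}{n}h_\infty(\xi)}$ term, whereas the printed statement \eqref{eq:slb_abstraction_rate} omits it; this appears to be a typo in the corollary (the same omission occurs in \eqref{eq:rate lower bound}) rather than a flaw in your argument, but as written your assertion that $\tfrac{n}{2}\log K$ matches the displayed formula does not hold literally.
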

\begin{proof}[Proof Sketch]
    Pick $s=\infty$ in \eqref{eq:slb_abstraction_distortion}, fix the right-hand side equal to $D$, and solve for $R$.
\end{proof}
Hence, Thm~\ref{thm:slb_abstractions} and Cor.~\ref{cor:slb_abstractions} provide fundamental limits on the size-accuracy tradeoff, or the scalability, of abstractions, for given dynamics $x^+=f(x)$.

\begin{remark}[On the assumptions of Thm. \ref{thm:slb_abstractions}]
    The first assumption of Thm. \ref{thm:slb_abstractions}, requiring $\Bc^S_l$ to be a union of smooth manifolds, is satisfied whenever the dynamics $f$ is piecewise continuously-differentiable. The second assumption is satisfied whenever $f$ is piecewise continuous and the initial condition distribution $p_{\xi_0}$ is such that $\mu_{\xi_0}\ll \Lc^n$, where $\Lc^n$ the Lebesgue measure in $\real^n$. 
\end{remark}

In the coming section, we provide: a) closed-form expressions on $h(\xi)$, $h_s(\xi)$ and $c_{\Bc_l^S}$, for certain classes of dynamics $x^+=f(x)$, and b) an interpretation on how the complexity of the dynamics and the time-horizon $l$ affect the fundamental lower bound \eqref{eq:slb_abstraction_distortion} in Thm. \ref{thm:slb_abstractions}. Afterwards, in Section \ref{ssec:curse}, we show how the curse of dimensionality arises, through a relaxed version of the bounds of Thm.~\ref{thm:slb_abstractions} and Cor.~\ref{cor:slb_abstractions}.

\subsection{Interpretation and calculus for Theorem \ref{thm:slb_abstractions}}
Before we proceed with the interpretation of Thm. \ref{thm:slb_abstractions}, let us show how one may compute $h(\xi)$, $h_s(\xi)$ and $c_{\Bc_l^S}$, which are required to compute the lower bound \eqref{eq:slb_abstraction_distortion}. Let us define the function $b_l:\Xc\to\Bc_l^S$ by 
\begin{equation}\label{eq:b}
    b_l(x) \coloneqq \big[\,\begin{matrix}
    x\tran & f(x)\tran & f(f(x))\tran & \cdots & f^{(l-1)\tran}(x) 
\end{matrix}\,\big]\tran,
\end{equation}
which maps an initial state into its $l$-long trajectory. 

\begin{proposition}[Computing $h(\xi)$ and $h_s(\xi)$]\label{prop:entropy}
    Consider a system $x^+ = f(x)$ with $f : \Xc \to \Xc$ measurable and piecewise Lipschitz%
    \footnote{That is, $\Xc$ is a countable union $\cup_i\Xc_i$ of Lebesgue-measurable sets such that the restriction of $f$ to each $\Xc_i$ is Lipschitz. This condition may be relaxed to $f$ \emph{approximately Lipschitz}, see \cite[Thm.~3.1.8, Sec.~3.2.1]{federer1969geometric}, which also implies approximate differentiability.} %
    and differentiable. Let Assumption \ref{assum:X} hold, and $\xi_0 \sim p_{\xi_0}:\Xc\to\real_+$. The following expressions hold
    \begin{equation}\label{eq:trajectory entropy}
    h(\xi) = h(\xi_0) + \frac{1}{2}\int_\Xc p_{\xi_0}(x)\log\det(J_{b_l}(x)\tran J_{b_l}(x))\diff x,
    \end{equation}
    \begin{equation}\label{eq:renyi trajectory}
        h_{s}(\xi) = \frac{1}{1-s} \log \int_{\real^n} \frac{p_{\xi_0}(x)^{s}}{\det(J_{b_l}(x)\tran J_{b_l}(x))^{\frac{s-1}{2}}} \, \diff x, \quad s>1,
    \end{equation}
    \begin{equation}\label{eq:renyi max trajectory}
        h_{\infty}(\xi) = \ess\,\inf_\Xc\Big(\frac{1}{2}\log\det(J_{b_l}(x)\tran J_{b_l}(x)) - \log p_{\xi_0}(x)\Big),
    \end{equation}
    where $J_{b_l}$ denotes the Jacobian matrix of $b_l$. Moreover, $h(\xi) \geq h(\xi_0)$ and $h_s(\xi) \geq h_s(\xi_0)$ for any $s \in [1,\infty).$
\end{proposition}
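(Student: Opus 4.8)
The plan is to treat the trajectory $\xi = b_l(\xi_0)$ as the pushforward of $\xi_0$ under the map $b_l$ of \eqref{eq:b}, obtain its density $p_\xi = \frac{\diff \mu_\xi}{\diff \Hc^n_{\Bc_l^S}}$ by the area formula of geometric measure theory, and then substitute into the definition \eqref{eq:generalized entropy} of $h$ (and the analogous one for $h_s$). Two structural facts drive everything. First, $b_l$ is a bijection onto $\Bc_l^S$: it is injective because its first $n$ components are $x\mapsto x$, and it is surjective since, for deterministic dynamics, every $l$-trajectory equals $b_l(x_0)$; hence the area formula reduces to an ordinary change of variables. Second, writing the Jacobian as a block stack $J_{b_l}(x) = [\,I_n\,;\,J_f(x)\,;\,\dots\,;\,J_{f^{(l-1)}}(x)\,]$, we get $J_{b_l}(x)\tran J_{b_l}(x) = I_n + \sum_{k=1}^{l-1} J_{f^{(k)}}(x)\tran J_{f^{(k)}}(x) \succeq I_n$, so $\det(J_{b_l}(x)\tran J_{b_l}(x)) \geq 1$ and the $n$-dimensional Jacobian $J_n b_l(x) := \sqrt{\det(J_{b_l}(x)\tran J_{b_l}(x))}$ is $\geq 1$ wherever $b_l$ is differentiable, which under the hypotheses is $\Lc^n$-a.e.\ on $\Xc$.

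Next I would invoke the area formula \cite[Sec.~3.2]{federer1969geometric} along the countable piecewise-Lipschitz decomposition $\Xc=\cup_i\Xc_i$: for every nonnegative measurable $g$ on $\Bc_l^S$, $\int_\Xc g(b_l(x))\,J_n b_l(x)\,\diff x = \int_{\Bc_l^S} g(y)\,\diff\Hc^n_{\Bc_l^S}(y)$, which also shows that $b_l$ carries $\Lc^n$-null sets to $\Hc^n_{\Bc_l^S}$-null sets. Applying this to $g$ supported on a Borel set $E\subseteq\Bc_l^S$ and equal there to $(p_{\xi_0}/J_n b_l)\circ b_l^{-1}$ gives $\mu_\xi(E) = \mu_{\xi_0}(b_l^{-1}(E)) = \int_{b_l^{-1}(E)} p_{\xi_0}\,\diff x = \int_E \frac{p_{\xi_0}(b_l^{-1}(y))}{J_n b_l(b_l^{-1}(y))}\,\diff\Hc^n_{\Bc_l^S}(y)$, so $\mu_\xi \ll \Hc^n_{\Bc_l^S}$ (this also recovers assumption 2 of Thm.~\ref{thm:slb_abstractions}) with $p_\xi(b_l(x)) = p_{\xi_0}(x)/\sqrt{\det(J_{b_l}(x)\tran J_{b_l}(x))}$ for $\Lc^n$-a.e.\ $x$.

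Then the three identities follow by substitution, pulling expectations back to $\Xc$ via the same area formula ($\expect_\xi[\phi(\xi)] = \int_\Xc \phi(b_l(x))\,p_{\xi_0}(x)\,\diff x$). For \eqref{eq:trajectory entropy}: $h(\xi) = -\int_\Xc p_{\xi_0}\log\!\big(p_{\xi_0}/\sqrt{\det(J_{b_l}\tran J_{b_l})}\big)\diff x$, which splits as $h(\xi_0)$ plus $\tfrac12\int_\Xc p_{\xi_0}\log\det(J_{b_l}\tran J_{b_l})\,\diff x$. For \eqref{eq:renyi trajectory}: $\expect_\xi[p_\xi(\xi)^{s-1}] = \int_\Xc p_{\xi_0}(x)^s\,\det(J_{b_l}(x)\tran J_{b_l}(x))^{-(s-1)/2}\,\diff x$, and $\tfrac1{1-s}\log(\cdot)$ gives the stated formula. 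For \eqref{eq:renyi max trajectory}: let $s\to\infty$ in \eqref{eq:renyi trajectory} and use $\lim_{s\to\infty}\|g\|_{L^s(\mu_{\xi_0})} = \|g\|_{L^\infty(\mu_{\xi_0})}$ with $g = (p_{\xi_0}/\sqrt{\det(J_{b_l}\tran J_{b_l})})\circ b_l^{-1}$. Finally, the monotonicity claims are immediate from $\det(J_{b_l}\tran J_{b_l})\geq 1$: the extra term in \eqref{eq:trajectory entropy} is nonnegative, so $h(\xi)\geq h(\xi_0)$; and for $s\in(1,\infty)$ the integrand in \eqref{eq:renyi trajectory} is $\leq p_{\xi_0}^s$, so, since $\tfrac1{1-s}<0$, $h_s(\xi)\geq h_s(\xi_0)$, the case $s=1$ being $h$ itself.

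The step I expect to be the main obstacle is the careful application of the area formula at the minimal regularity assumed: reducing to the countable, merely approximately differentiable Lipschitz pieces, discarding the $\Lc^n$-null overlaps of the pieces and of their images, checking $\Lc^n$-integrability of $\log\det(J_{b_l}\tran J_{b_l})$ against $p_{\xi_0}$ so that $h(\xi)$ is well defined, and confirming the two-way correspondence of $\Lc^n$-null and $\Hc^n_{\Bc_l^S}$-null sets so that $p_\xi$ is unambiguous; the $s\to\infty$ passage to \eqref{eq:renyi max trajectory} is a standard but slightly delicate essential-supremum limit.
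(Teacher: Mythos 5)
Your proposal is correct and follows essentially the same route as the paper's proof: both derive the trajectory density $p_\xi(b_l(x)) = p_{\xi_0}(x)/\sqrt{\det(J_{b_l}(x)\tran J_{b_l}(x))}$ via injectivity of $b_l$ and the area formula, substitute into the definitions of $h$ and $h_s$, obtain \eqref{eq:renyi max trajectory} as the $s\to\infty$ limit, and deduce the monotonicity claims from $J_{b_l}\tran J_{b_l} \succeq \eye$. Your write-up is, if anything, slightly more explicit about the measure-theoretic bookkeeping (null-set correspondence, bijectivity onto $\Bc_l^S$) than the paper's.
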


\begin{proposition}[Computing $c_{\Bc_l^S}$]\label{prop:c}
    Consider a system $x^+ = f(x)$, with $f : \Xc \to \Xc$ differentiable a.e., and let Assumption \ref{assum:X} hold. The following facts on $c_{\Bc_l^S}$ hold:
    \begin{enumerate}
        \item $c_{\Bc_l^S} \leq v_n$, if $f$ is affine;
        \item $c_{\Bc_l^S} \leq M^lv_n$, if $f$ is piecewise affine with $M$ modes;
        \item $c_{\Bc_l^S} \leq v_n\big(\sum_{i=0}^{l-1} L^{2i}\big)^{n/2}$, if $f$ is Lipschitz continuous with constant $L$.
    \end{enumerate}
\end{proposition}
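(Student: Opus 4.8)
The plan is to work through the parametrization $b_l:\Xc\to\Bc_l^S$ of the trajectory set by the initial condition, defined in \eqref{eq:b}; recall $\Bc_l^S=b_l(\Xc)$ and that the first $n$ coordinates of $b_l(v)$ are $v$ itself. Fix $\hat x\in\real^{nl}$ and $\delta>0$, let $\hat x'\in\real^n$ collect the first $n$ entries of $\hat x$, and put $E:=\{v\in\Xc:\ \|b_l(v)-\hat x\|<\delta\}$. Then $E$ is open (hence Lebesgue-measurable), $B(\hat x,\delta)\cap\Bc_l^S=b_l(E)$ (with $B$ as in \eqref{eq:c} applied to $\Bc_l^S$), and $\|v-\hat x'\|\le\|b_l(v)-\hat x\|$, so $E$ lies in an $n$-ball of radius $\delta$ and $\Lc^n(E)\le v_n\delta^n$; this is the elementary estimate I would use for item~3.

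For items~1 and~2 I would instead exploit flatness of $\Bc_l^S$: the area formula gives $\Hc^n(b_l(E))=\int_E\sqrt{\det(J_{b_l}\tran J_{b_l})}\,\diff x$, whose integrand is everywhere $\geq 1$ (the top block of $J_{b_l}$ is the identity) and, for expanding $f$, large, so this route cannot in general reach the targets $v_n$, $M^lv_n$. Instead, if $f$ is affine then $b_l$ is affine and $\Bc_l^S$ lies in an affine subspace $V\subseteq\real^{nl}$ of dimension at most $n$; by monotonicity of $\Hc^n$ and the elementary fact that an ambient $\delta$-ball meets an affine $n$-plane in a Euclidean $n$-disc of radius $\le\delta$, one gets $\Hc^n_{\Bc_l^S}(B(\hat x,\delta))\le\Hc^n(B(\hat x,\delta)\cap V)\le v_n\delta^n$, i.e.\ $c_{\Bc_l^S}\le v_n$. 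For $f$ piecewise affine with modes $\Xc_1,\dots,\Xc_M$, I would partition $\Xc$ into the at most $M^l$ regions $\Xc_\sigma:=\{v\in\Xc:\ f^{(i)}(v)\in\Xc_{j_i},\ i=0,\dots,l-1\}$ indexed by mode sequences $\sigma=(j_0,\dots,j_{l-1})$; on each $\Xc_\sigma$ the iterates $f^{(0)},\dots,f^{(l-1)}$ are compositions of fixed affine maps, so $b_l|_{\Xc_\sigma}$ is affine and $b_l(\Xc_\sigma)$ lies in an affine $n$-plane. Since $\Bc_l^S=\bigcup_\sigma b_l(\Xc_\sigma)$, applying the affine estimate to each of the (at most $M^l$) patches and summing via subadditivity of $\Hc^n$ gives $\Hc^n_{\Bc_l^S}(B(\hat x,\delta))\le M^lv_n\delta^n$, i.e.\ $c_{\Bc_l^S}\le M^lv_n$.

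For item~3, where $\Bc_l^S$ need not be flat, I would bound the Lipschitz constant of $b_l$: for $x,y\in\Xc$, $\|b_l(x)-b_l(y)\|^2=\sum_{i=0}^{l-1}\|f^{(i)}(x)-f^{(i)}(y)\|^2\le\big(\sum_{i=0}^{l-1}L^{2i}\big)\|x-y\|^2$, so $b_l$ is $\Lambda$-Lipschitz with $\Lambda=\big(\sum_{i=0}^{l-1}L^{2i}\big)^{1/2}$. The standard bound on Hausdorff measure under Lipschitz maps \cite{federer1969geometric} then gives $\Hc^n_{\Bc_l^S}(B(\hat x,\delta))=\Hc^n(b_l(E))\le\Lambda^n\Hc^n(E)=\Lambda^n\Lc^n(E)\le v_n\big(\sum_{i=0}^{l-1}L^{2i}\big)^{n/2}\delta^n$, which is item~3.

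The calculations are routine; the one genuine idea is the choice made for items~1--2, namely that flatness of $\Bc_l^S$ lets the ambient $\delta$-ball control each affine patch's $\Hc^n$-measure directly, bypassing the area-formula prefactor. The only mild obstacle I expect is bookkeeping: verifying the mode-sequence regions $\Xc_\sigma$ are measurable (they are preimages of measurable sets under measurable iterates of $f$), recalling that $\Hc^n$ agrees with $\Lc^n$ on $\real^n$ under the normalization built into $v_n$, and noting that degenerate or overlapping affine patches are harmless since $\Hc^n$ is subadditive and vanishes on sets of Hausdorff dimension below $n$.
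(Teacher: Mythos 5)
Your proof is correct and follows essentially the same route as the paper's: for items 1--2, flat affine patches of $\Bc_l^S$ intersected with the ambient $\delta$-ball (with the $M^l$ mode-sequence decomposition and subadditivity of $\Hc^n$), and for item 3 a Lipschitz pushforward of Hausdorff measure with the constant $\Lambda=\big(\sum_{i=0}^{l-1}L^{2i}\big)^{1/2}$. The only difference is that you place the preimage $E$ directly inside a radius-$\delta$ ball centered at the first $n$ coordinates of $\hat x$ (exploiting the identity block of $b_l$), whereas the paper routes this through the lower bi-Lipschitz bound of Lemma~\ref{lem:lipschitz ball}; your version is a clean, slightly tighter shortcut to the same containment.
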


\begin{remark}[$c_{\Bc_l^S}$ at high rates]\label{rem:c_piecewise} As the partition size $|\Yc|$ grows large, the Chebyshev balls of the abstraction outputs (c.f. Prop. \ref{prop:abstraction vs encoder distortion} and Lemma \ref{lem:sphere_packing}) become small. Hence, in the case of smooth $f$, their intersection with the manifold approaches the case of an affine system, with $c_{\Bc_l^S} \leq v_n.$ Similarly, in the piecewise affine case, for sufficiently small balls -- at least $|\Yc| \geq M^l$ --, these can be chosen to intersect with at most one piece each. Thus, to reduce conservatism of the bound in such high-rate cases, one can inspect the lower bound of Thm.~\ref{thm:slb_abstractions} by using  $c_{\Bc_l^S} = v_n.$ We demonstrate this in the numerical examples in Section \ref{sec:numerics}.    
\end{remark}

We proceed to discussing Thm. \ref{thm:slb_abstractions}. First, inspecting  \eqref{eq:slb_abstraction_distortion}, systems with more complex dynamics lead to bigger abstraction distortion, for fixed abstraction size, since the right-hand side is increasing w.r.t. $h(\xi)$ and the Rényi entropy $h_s(\xi)$; equivalently, more complex systems require bigger abstraction size for the same distortion.

Regarding the effect of the time-horizon $l$ on the bound \eqref{eq:slb_abstraction_distortion}, we have to inspect the effect that $l$ has on $h(\xi)$ and $h_s(\xi)$. Let us first demonstrate that, for the ``simple" dynamics of exponentially stable systems, the abstraction distortion converges to 0 for $l\to\infty$.

\begin{example}[Exponentially stable systems] Consider a system $x^+ = f(x)$ whose origin is exponentially stable on a given compact set in $\real^n$. Then, there is a Lyapunov function $V: \real^n \mapsto \real_+$ satisfying $V(x) \geq \frac{1}{r}\|x\|^2$ for a given $r > 0$ and, for all $x$ s.t.~$V(x)\leq 1$ (w.l.o.g.), $V(f(x)) \leq aV(x),$ with $ a \in [0, 1).$ This allows us to create an abstraction $A$ with the associated partition $Y_i = \{x \in \real^n \mid a^i < V(x) \leq a^{i-1}\}$ for $i = 1,...,N,$ and $Y_{N+1} = \{x \in \real^n \mid V(x) \leq a^N\}$; and transitions $Y_i \xrightarrow[A]{} Y_j$  if and only if $j < i$ or $j = i = N+1$. The abstraction encapsulates the fact that, after $N$ steps or less, all trajectories reach the sublevel set $V(x) \leq a^N$. We get that $2r$ and $2a^nr$ are overapproximations of the diameters of $Y_i, ~~i \leq N,$ and $Y_{N+1},$ respectively. Then, recalling the distortion \eqref{eq:abstraction_distortion_function}, for any trajectory $\xi$ of the system, $l > N,$
\[ 
    d(\xi,\Omega_A) \leq \frac{1}{l}(2Nr^2 + 2(l-N)a^{2N}r^2) \underset{l\to\infty}{=} 2a^{2N}r^2,
\]
which can be made arbitrarily small by suitable choice of $N$.
\end{example}
Indeed, as the following example shows, for Schur LTI systems, the bound \eqref{eq:slb_abstraction_distortion} converges to 0, for $l\to\infty$, which demonstrates the bound's tightness.
\begin{example}[Schur LTI systems] For an LTI system $x^+=Ax$, where the matrix $A$ is Schur-stable (its spectral radius is less than 1), we have $c_{\Bc^S_l} \leq v_n$ and
    \[
    \lim_{l\to\infty}J_{b_l}(x)\tran J_{b_l}(x) = \sum_{i=0}^{\infty}(A^i)\tran A^i = W,
    \] 
    where $W$ is the discrete controllability Gramian \cite[Thm.~6.D1]{ctchen} for the pair $(A,I)$; that is, $W$ is the solution to the Lyapunov equation $W-A\tran\, W A = I.$
Thus, both $h(\xi)$ and $h_s(\xi)$ are finite, for $l\to\infty$, and the bound \eqref{eq:slb_abstraction_distortion} converges to 0.
\end{example}

Conversely, the example below shows that, even for marginally stable systems, the bound may not vanish with $l\to\infty$. 

\begin{example}[Marginally stable LTI system]
    For the simple system $x^+ = x$, we have that $\det(J_{b_l}(x)\tran J_{b_l}(x)) = \det(lI) = l^n$ and, by Prop.~\ref{prop:entropy}, $h(\xi) = h(x_0) + \frac{n\log l}{2}$ and 
    \[
    \begin{aligned}
    h_s(\xi) &= \frac{1}{1-s}\log\Big(\frac{1}{l^{n(s-1)/2}}\int_{\real^n}p_{0}^{s} \, \diff x\Big)\\
    &= -\frac{n(s-1)\log l}{2(1-s)} + \frac{1}{1-s}\log\int_{\real^n}p_{0}^{s} \, \diff x \\
    &=\frac{n\log l}{2} + h_{s}(\xi_0).
    \end{aligned}
    \] Replaced in the distortion bound \eqref{eq:slb_abstraction_distortion}, the $l$ in the denominator is canceled out, indicating a positive lower bound for any $l$. This independence on $l$ is expected, as abstracting $x^+ = x$ is the same as encoding the initial condition $\xi_0$.
\end{example}

\subsection{A relaxed lower bound and the curse of dimensionality}\label{ssec:curse}
We now derive relaxed lower bounds that do not depend on the dynamics, and thus do not require computing trajectory entropies, as the following corollary to Thm.~\ref{thm:slb_abstractions} shows.

\begin{corollary}[Relaxed lower bound for abstractions]\label{cor:relaxed}
    Consider a dynamical system $x^+=f(x)$ under the assumptions of Thm.~\ref{thm:slb_abstractions}, with $\xi_0\sim p_{\xi_0}:\Xc\to\real_+$. Assume that $f$ is either (piecewise) affine or Lipschitz continuous with constant $L$. The average distortion of any abstraction $A$ with partition size $|\Yc|\leq e^R$, where $R>0$, is lower bounded as follows:
    \begin{equation}\label{eq:slb_abstraction_distortion_contractive}
    \begin{aligned}
        D_{abs}(R)  \geq\, K(l)\bigg(&\frac{n}{2}\Big(\frac{e^{-R+h(\xi_0)-n/2}}{\Gamma(1+n/2)}\Big)^{2/n} \\&+\max_{s\in (1,\infty]} e^{\frac{2}{n}(-\frac{s}{s-1}R+h_{s}(\xi_0))}\bigg),
    \end{aligned}
    \end{equation}
    where
    \begin{enumerate}
        \item $K(l) = \frac{1-L^2}{l{v_n}^{2/n}}$, if $f$ is Lipschitz with $L < 1$;
        \item $K(l) = \frac{1-L^2}{l^2{v_n}^{2/n}},$ if $f$ is Lipschitz with $L=1$;
        \item $K(l) = \frac{1-L^2}{lL^{2l}{v_n}^{2/n}}$ if $f$ is Lipschitz with $L>1$;
        \item $K(l) = \frac{1}{lM^{2l/n}{v_n}^{2/n}},$ if $f$ is piecewise affine with $M$ pieces.
    \end{enumerate}
    Moreover, picking $s=\infty$ gives the following lower bound on $R_{abs}(D):$
    \begin{equation}\label{eq:rate lower bound}
    \begin{aligned}
        R_{abs}(D) \geq &\frac{n}{2}\log\!\Big(K(l)\frac{n}{2}\frac{e^{\frac{2}{n}h(\xi_0)-1}}{\Gamma(1+n/2)^{2/n}} + K(l)e^{\frac{2}{n}h_\infty(\xi_0)}\Big) \\&-\frac{n}{2}\log(D).
    \end{aligned}
    \end{equation}
\end{corollary}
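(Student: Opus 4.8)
The plan is to relax, one at a time, the three dynamics-dependent quantities in the bound \eqref{eq:slb_abstraction_distortion} of Theorem~\ref{thm:slb_abstractions} --- namely $h(\xi)$, $h_s(\xi)$ and $c_{\Bc_l^S}$ --- into quantities that depend only on the initial-condition distribution and on the Lipschitz/piecewise-affine class of $f$. The first step is to notice that the two summands on the right-hand side of \eqref{eq:slb_abstraction_distortion} share the common factor $\frac{1}{l}c_{\Bc_l^S}^{-2/n}$; pulling it out gives
\[
D_{abs}(R)\;\geq\;\frac{c_{\Bc_l^S}^{-2/n}}{l}\left(\frac{n}{2}\Big(\frac{e^{-R+h(\xi)-n/2}}{\Gamma(1+n/2)}\Big)^{2/n}+\max_{s\in(1,\infty]}e^{\frac{2}{n}\left(-\frac{s}{s-1}R+h_s(\xi)\right)}\right).
\]
I would then record two monotonicity facts: the bracketed term is increasing in $h(\xi)$ and in each $h_s(\xi)$ (each enters only through an exponential with positive coefficient $2/n$), while the prefactor $\frac{1}{l}c_{\Bc_l^S}^{-2/n}$ is decreasing in $c_{\Bc_l^S}$.

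Next I would invoke Proposition~\ref{prop:entropy}, which gives $h(\xi)\geq h(\xi_0)$ and $h_s(\xi)\geq h_s(\xi_0)$ for $s\in[1,\infty)$; the case $s=\infty$ follows the same way, since $J_{b_l}$ contains the identity as its leading block, so $\det(J_{b_l}\tran J_{b_l})\geq 1$ and hence $h_\infty(\xi)\geq h_\infty(\xi_0)$. Substituting the smaller quantities $h(\xi_0),h_s(\xi_0)$ into the bracket can only decrease it, so the inequality is preserved. Dually, Proposition~\ref{prop:c} supplies $c_{\Bc_l^S}\leq v_n$, $c_{\Bc_l^S}\leq M^l v_n$, or $c_{\Bc_l^S}\leq v_n(\sum_{i=0}^{l-1}L^{2i})^{n/2}$ according to the class of $f$; replacing $c_{\Bc_l^S}$ by the larger right-hand side only decreases the prefactor, again preserving the inequality. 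Chaining the two substitutions yields $D_{abs}(R)\geq K(l)\cdot(\text{bracket with }h(\xi_0),h_s(\xi_0))$, where $K(l)$ is a lower bound on $\frac{1}{l}c_{\Bc_l^S}^{-2/n}$ read off from the estimate above.

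It then remains to put $K(l)$ in closed form. Raising the Lipschitz estimate to the power $2/n$ gives $c_{\Bc_l^S}^{2/n}\leq v_n^{2/n}\sum_{i=0}^{l-1}L^{2i}$, and the finite geometric sum is handled by cases: it is $\leq \frac{1}{1-L^2}$ when $L<1$, equals $l$ when $L=1$, and is $\leq \frac{L^{2l}}{L^2-1}$ when $L>1$; in the piecewise-affine case one uses $c_{\Bc_l^S}^{2/n}\leq v_n^{2/n}M^{2l/n}$ directly. Taking reciprocals and dividing by $l$ produces the four expressions for $K(l)$ in the statement, and hence \eqref{eq:slb_abstraction_distortion_contractive}.

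Finally, for the rate bound I would specialise to $s=\infty$, so that $\frac{s}{s-1}\to 1$ and both summands carry the factor $e^{-2R/n}$; collecting it, the right-hand side becomes $K(l)e^{-2R/n}\big(\frac{n}{2}e^{\frac{2}{n}h(\xi_0)-1}\Gamma(1+n/2)^{-2/n}+e^{\frac{2}{n}h_\infty(\xi_0)}\big)$, which is decreasing in $R$; since $D_{abs}(R_{abs}(D))\leq D$ by definition of the inverse, setting this expression equal to $D$ and solving for $R$ (exactly as in Corollary~\ref{cor:slb_abstractions}) gives \eqref{eq:rate lower bound}. I do not expect a genuine obstacle here, as this is a corollary of Theorem~\ref{thm:slb_abstractions}: the only points requiring care are keeping all inequality directions consistent under the two opposite-sign relaxations ($c_{\Bc_l^S}$ relaxed upward, the entropies relaxed downward) and treating the degenerate $L=1$ case of the geometric sum separately.
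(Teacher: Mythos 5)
Your proof is correct and follows precisely the paper's own route: it combines Theorem~\ref{thm:slb_abstractions} with the monotonicity of the bound in $h(\xi)$, $h_s(\xi)$ (Proposition~\ref{prop:entropy}, extended to $s=\infty$ via $\det(J_{b_l}\tran J_{b_l})\geq 1$ exactly as you note) and in $c_{\Bc_l^S}$ (Proposition~\ref{prop:c}), and then bounds the geometric sum $\sum_{i=0}^{l-1}L^{2i}$ by cases, which is literally what the paper's proof sketch prescribes. The only caveat is that your (correct) computation actually yields the numerator $L^2-1$ for $L>1$ and $1$ for $L=1$ (rather than the literal $1-L^2$, which is nonpositive there), and places $K(l)$ on \emph{both} summands inside the logarithm of \eqref{eq:rate lower bound}; these discrepancies are typos in the stated corollary rather than gaps in your argument, since your expressions are exactly what the paper's own derivation produces.
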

\begin{proof}[Poof Sketch]
    The result follows from Thm.~\ref{thm:slb_abstractions}, by combining Propositions \ref{prop:entropy} and \ref{prop:c} and using the fact that $\sum_{i=0}^{k-1}a^i = k$ for $a=1$, and otherwise $\sum_{i=0}^{k-1}a^i = (a^k - 1)/(a-1) \leq \max(1, a^k)/|a-1|$.
\end{proof}

While more conservative than Thm.~\ref{thm:slb_abstractions}, Cor.~\ref{cor:relaxed} confirms the curse of dimensionality for abstractions: for any system and fixed finite $l$, \eqref{eq:rate lower bound} shows that the size of the partition $e^R$ \emph{fundamentally increases exponentially with $n$} to achieve a prescribed distortion.

\section{Numerical Examples}\label{sec:numerics}
\subsection{The doubling map}\label{ssec:chaotic map}

Consider the doubling map from Example \ref{ex:doubling map}. For any trajectory length $l$, its behavior $\Bc_l^S$ is composed by $2^{l-1}$ line segments in $\real^l$ described by $(x_0, 2x_0, 4x_0, ..., 2^{l-1}x_0) \bmod 1,$ uniformly distributed with  $p_\xi(\xi) = 1/\sqrt{1+4+...+4^{l-1}} = \sqrt{3/(4^l-1)},$ giving $h(\xi) = h_s(\xi) = \frac{1}{2}(\log(4^l-1)-\log{3})$ for all $s \in (1,\infty].$ Using Prop.~\ref{prop:c} for piecewise affine systems, we obtain $c_{\Bc_l^S} \leq \sqrt{(4^l-1)/3}v_n,$ enabling us to compute the lower bound in Theorem \ref{thm:slb_abstractions}. In light of Remark \ref{rem:c_piecewise}, we also determine the high-rate lower bound by picking $c_{\Bc_l^S} = v_1 = 2.$ The lower-bound curves can be seen in Fig.~\ref{fig:doubling rd curve} for $s = 2$ and $s = \infty$. It is apparent that the tightest bound is obtained with $s=\infty$ and $c_{\Bc_l^S} = v_1.$ In this case, the bound is consistently half of that of the optimal distortion $D_{abs}(R)$, which is remarkably close. As a comparison, the standard Shannon lower bound is~1.42 times smaller than the optimal quantizer distortion of a uniform random variable in $\real^1$, in the standard source-coding setting with Euclidean distortion.

\begin{figure}[thb]
    \centering
    \includegraphics[width=0.8\linewidth]{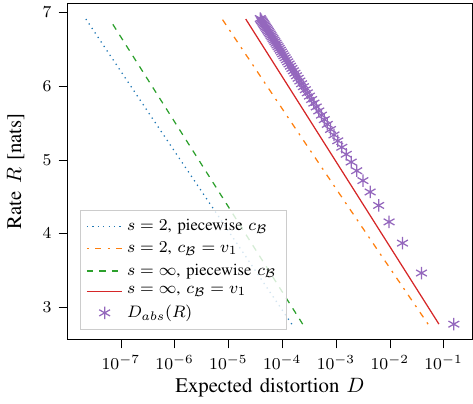}
    \caption{Section \ref{ssec:chaotic map}: Comparison between $D_{abs}(R)$ and the fundamental lower bound from Theorem \ref{thm:slb_abstractions}, for $l=5$.}
    \label{fig:doubling rd curve}
\end{figure}

Let us explain how we were able to compute the actual optimal achievable abstraction distortion. Following the reasoning in Section \ref{sec:main}, we first build an optimal cover for $\Bc_l^S$ (afterwards, we show that this optimal cover admits a distortion that is equal to that of a specific abstraction with the same rate, and thus its rate-distortion curve is optimal, among all abstractions). For a given $l$, consider $R = \log(k2^{l-1})$, where $k$ is an arbitrary natural number. Since all segments are equiprobable and congruent, and probability is uniform among them, the optimal cover of $\Bc_l^S$ is obtained by cutting each of the $2^{l-1}$ segments in $k$ equal pieces.\footnote{For uniform distributions, with convex and symmetric distortion functions as the one in \eqref{eq:abstraction_distortion_function}, optimal covers (or \emph{quantizers}) are uniform \cite{quantization}.} The cover is then the collection $M=\{C_i\}_{i=1}^{2^{l-1}k}$ of all these pieces---see Fig.~\ref{fig:chaos boxes} for an example with $l=3$ and $k=2$. Now let $C_\xi \in M$ denote the piece including the trajectory $\xi$. The expected error between $\xi$ and the Chebyshev center of $C_\xi$ is
$ \expect[\|\xi-x_c(C_\xi)\|^2] = \frac{1-4^{-l}}{9k^2},$
which is obtained by computing the squared length of each segment, $L^2 = 1/(2^{l-1}k)^2(1+2^2+...+(2^{l-1})^2) = (4^l-1)/(3k^24^{l-1}) = 4(1-4^{-l})/3k^2,$ followed by using the variance of the uniform distribution, giving $L^2/12$. Then, for the cover's set-based distortion, we have $d(\xi,C_\xi)=\frac{1}{l}\max_{\xi'\in C_\xi}\|\xi-\xi'\|^2 = \frac{1}{l}(\|\xi-x_c(C_\xi)\| + L/2)^2$, since $C_\xi$ is a straight-line segment. We, thus have $\expect[d(\xi,C_\xi)] = \frac{1}{l}\expect[(\|\xi-x_c(C_\xi)\| + L/2)^2]=7L^2/(12l)$, where we used that $X\coloneqq\|\xi-x_c(C_\xi)\|+L/2$ is a uniform random variable in $[L/2,L]$, and thus $E[X^2]= \mathrm{Var}(X)+\expect[X]^2 = L^2/48 + 9L^2/16 = 7L^2/12$. As $M$ is the optimal cover, we have $D_{cover}(R) = \expect[d(\xi,C_\xi)] = \frac{7}{l}4^{l-2}(4^l-1)e^{-2R},$
where we used $R = \log(k2^{l-1})$, and $D_{cover}(R)$ is the optimal distortion among all $nl$-dimensional covers of $\Bc_l^S$.
\begin{figure*}[t]
    \centering
    \includegraphics[width=1\linewidth]{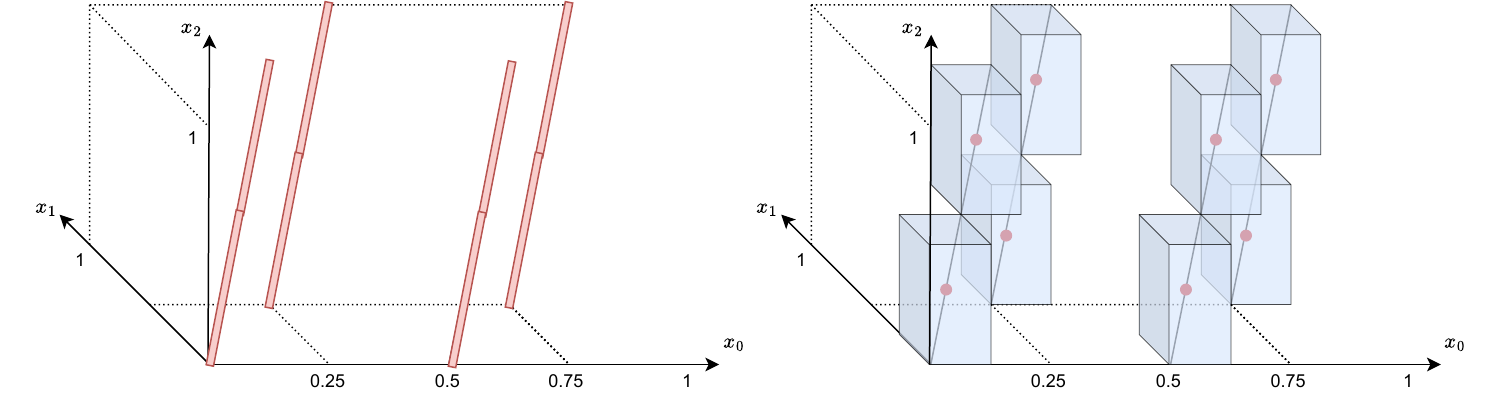}
    \caption{Section \ref{ssec:chaotic map}: Optimal cover of $\Bc_l^S, \ l=3$ with $k=2$ (left) and corresponding abstraction trajectories (right, blue boxes). On the right, the lines inside the boxes represent $\Bc_3^S$, with their Chebyshev centers marked in red. The maximal distance between any point in the $\Bc_3^S$ and its corresponding blue box is obtained at one of the edges intersecting with the trajectory.}
    \label{fig:chaos boxes}
\end{figure*}

Finally, we show that 
$$D_{abs}(R) = D_{cover}(R) = \frac{7}{l}4^{l-2}(4^l-1)e^{-2R}.$$ Notice that, in general, $D_{abs}(R)\geq D_{cover}(R)$, as behaviors of abstractions are covers. However, the optimal cover built above admits an abstraction $A$ that gives the same distortion, and thus we have $D_{abs}(R)=D_{cover}(R)$. First, let $\Yc$ be the uniform grid of $[0,1]$ with segments of length $1/k2^{l-1}$. Each trajectory of the abstraction is a sequence of segments of lengths $1/k2^{l-1}, 1/k2^{l-2}, ..., 1/k$; thus, for any given trajectory $\xi$, the abstraction output $\Omega_A(\xi)$ is a box in $\real^l$ containing any related trajectory $\xi$ (see Fig.~\ref{fig:chaos boxes}). In fact, $\xi$ lies on one of the diagonals of the box $\Omega_A(\xi)$, and this diagonal is precisely $C_\xi$. We thus have $\arg\sup_{\xi'\in\Omega_A(\xi)}\|\xi-\xi'\|^2= \arg\sup_{\xi'\in C_\xi}\|\xi-\xi'\|^2$, and hence $d(\xi,\Omega_A(\xi)) = d(\xi,C_\xi)$. Consequently, the abstraction has the same distortion as the optimal cover.

\subsection{A 3D nonlinear system and abstractions with uniform grids}\label{ssec:3d nonlinear case}

Consider the nonlinear system $f: \real^3 \to \real^3$ where
\[ f(x) = \begin{bmatrix}
        0.9x_1 + 0.1\sin {x_2} \\
        2x_2^3 - x_2 \\
        0.9x_3 + 0.1x_1x_2,
\end{bmatrix}
\]
and $\Xc = [-1, 1]^3,$ which is forward invariant under $f$.  This system has multiple equilibria, hence the origin is not stable in $\Xc$. For each $N$ in $\{10, 20, 50, 100\},$ we build abstractions $A_N$ by using uniform partitioning of $\Xc$ with grids of size $N \times N \times N$ and determining the transition map using interval arithmetic.  Then, we compute the distortion lower bound from Theorem \ref{thm:slb_abstractions} using Prop.~\ref{prop:entropy} and Prop.~\ref{prop:c}, case 3.\footnote{The entropies were computed using Monte-Carlo integration with 10000 samples, while Jacobians and the Lipschitz constant were determined using automatic differentiation.} Furthermore, lower bounds were also computed by picking $c_{\Bc_l^S} = v_3$, in light of Remark \ref{rem:c_piecewise}. The resulting distortion lower bound curves can be seen in Fig.~\ref{fig:real abstraction rd}. In this case, as the abstraction we construct is not necessarily the optimal one, its expected distortion is generally 100x higher than the fundamental lower bound. Still, this demonstrates the validity of the lower bound, even in cases with complex, nonlinear dynamics; even more importantly, it indicates how conservative standard abstractions with uniform grids might be.

\begin{figure}[thb]
    \centering
    \includegraphics[width=0.8\linewidth]{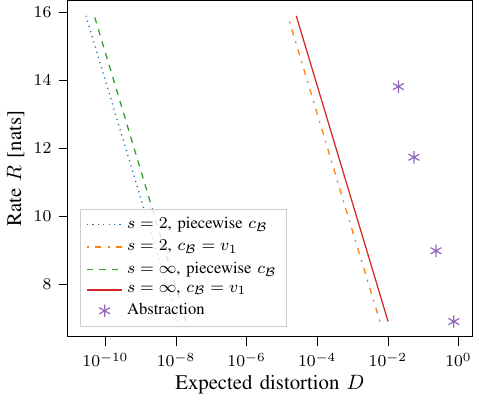}
    \caption{Section \ref{ssec:3d nonlinear case}: Comparison between expected distortion from the constructed abstractions and the fundamental lower bound from Theorem \ref{thm:slb_abstractions}.}
    \label{fig:real abstraction rd}
\end{figure}

\section{Conclusion and Future Research:\\Towards Minimal Abstractions}
We have developed a statistical, quantitative theory on the size-accuracy tradeoff of finite abstractions of dynamical systems. Through this theory, we have uncovered fundamental limits on their scalability: given the system dynamics, we have obtained a fundamental bound on the achievable abstraction accuracy, for a given abstraction size; and, conversely, a fundamental lower bound on the abstraction size, for a prescribed accuracy. To that end, we have established connections with rate-distortion theory. From an information-theoretic perspective, we have developed rate-distortion theory for the particular class of encoder-decoder pairs that abstractions constitute: set-based, with set-based distortion. Overall, this novel theory quantifies scalability limits of abstractions, and provides insights on how the complexity of the dynamics to be abstracted dictates these limits. 

Most importantly, the developed theory may be employed to construct minimal abstractions, harnessing their full scalability potential. From this work, it becomes clear that, to construct minimal abstractions, one has to solve the problem of encoding trajectories of dynamical systems, through coverings in a high-dimensional ambient space. In fact, this has already been demonstrated, in Section \ref{ssec:chaotic map}, where we construct a minimal abstraction of the doubling-map dynamics. Future research will thus focus on the general problem of constructing minimal abstractions. Towards that goal, information-theoretic algorithms optimizing the rate-distortion tradeoff, such as the \emph{information bottleneck} method \cite{tishby2000information}, could be adapted for abstractions. 

\section{Technical Results and Proofs}\label{sec:proofs}

\begin{proof}[\textbf{Proof of Prop. \ref{prop:abstraction vs encoder distortion}}]
    For any given $\xi\in \Bc_{l}^S$, we will prove that
    \begin{equation*}
        d(\xi,\Omega_A) \geq \frac{1}{l}\|\xi-\xi_{q_A}\|^2 + \frac{1}{l}r_c^2(\Omega_A),
    \end{equation*}
    where note that $\xi\in \Omega_A$ and $\xi_{q_A}=x_c(\Omega_A)$. Then, the proof is complete by applying the expectation operator to the above inequality.

    Define
    $w(x') = \max_{y\in \Omega_A} \|y-x'\|^2
    $. The function $w$ is convex, being the pointwise maximum of the convex quadratic maps
    $x'\mapsto\|y-x'\|^2$. We have
    $x_c(\Omega_A) = \arg\min_{x'\in\mathbb R^{nl}} w(x')
    $ and
    $r_c^2(\Omega_A) = w(x_c(\Omega_A)) = \max_{y\in \Omega_A}\|y-x_c(\Omega_A)\|^2$.
    
    Define the set of maximizers
    \[
    M := \{y\in \Omega_A : \|y - x_c(\Omega_A)\| = r_c(\Omega_A)\}.
    \]
    The subdifferential of $w$ at $x'$ is $\partial w(x')
    = \operatorname{conv}\{x'-y: \ y\in M\}$, where $\operatorname{conv}$ denotes the convex hull operator. Since $x_c(\Omega_A)$ minimizes $w$, the optimality condition
    $0\in\partial w(x_c(\Omega_A))$ gives
    $
    0 \in \operatorname{conv}\{\,x_c(\Omega_A)-y : y\in M\,\}
    $. Hence there exist finitely many points $y_1,\dots,y_m\in M$ and
    coefficients $\lambda_i\ge0$, $\sum_i\lambda_i=1$, such that
    \begin{equation}\label{eq:convexcomb}
    \sum_{i=1}^m \lambda_i (y_i - x_c(\Omega_A)) = 0.
    \end{equation}
    
    Now, fix $\xi \in \Xc^l$ and let
    $\xi_* \in \arg\max_{y\in \Omega_A} \|y - \xi\|^2$. By definition of $\xi_*$, for every $y_i\in M$ we have $\|y_i - \xi\|^2 \le \|\xi_* - \xi\|^2$.
    Taking the convex combination with the $\lambda_i$ and expanding gives
    \[
    \sum_{i=1}^m \lambda_i\big(\|y_i-\xi\|^2 - \|\xi_* - \xi\|^2\big) \le 0.
    \]
    Since { $\|y_i - \xi\|^2 = \|y_i - x_c(\Omega_A)\|^2
       + \|x_c(\Omega_A) - \xi\|^2
       + \allowbreak 2 (y_i - x_c(\Omega_A))\tran(x_c(\Omega_A) - \xi)$}
    and $\|y_i - x_c(\Omega_A)\|^2 = r_c(\Omega_A)^2$, for the above inequality we have
    \[
    r_c(\Omega_A)^2 + \|x_c(\Omega_A)-\xi\|^2
      - \|\xi_* - \xi\|^2 \le 0,
    \]
    where, using \eqref{eq:convexcomb}, the cross term has vanished.
    Finally, using \eqref{eq:abstraction_distortion_function},
    \[
    r_c(\Omega_A)^2 + \|\xi_{q_A}-\xi\|^2 \le \|\xi_* - \xi\|^2=l\cdot d(\xi,\Omega_A).
    \]
\end{proof}

Towards proving Thm. \ref{thm:slb_abstractions}, we introduce the following lemma.
\begin{lemma}\label{lem:sphere_packing}
Let $M\subset \mathbb{R}^n$ be a finite union of bounded, disjoint, $m$-dimensional $C^1$-manifolds. Let $X$ be a random variable in $M$ with probability measure $\mu_X\ll\Hc^m_{M}$ and density $p= \frac{\diff \mu_x}{\diff \Hc^m_{M}}$. Then, for any collection $\Yc \coloneqq \{Y_i\}_{i=1}^N$ of $N$ measurable, $n$-dimensional sets $Y_i\subseteq\real^n$ covering $M$, the following holds:

\begin{equation}\label{eq:sphere packing inequality}
\begin{aligned}
    \inf_{\Yc}
\mathbb{E}_{X } \Big[ \sum_{i=1}^N \mathbf{1}_{Y_i}(X) \, r_c(Y_i)^2 \Big]
\ge&\\ c_M^{-2/m} \max_{s\in (1,\infty]} e^{\frac{2}{m}h_{s}(X)}\, N^{-\frac{2}{m(1-1/s)}}&,
\end{aligned}
\end{equation}
where $c_M$ is defined by \eqref{eq:c}, $\mathbf{1}_{Y_i}(\cdot)$ is the indicator function of set $Y_i$, $r_c(Y_i)$ denotes the Chebyshev radius of $Y_i$, and $v_m=\frac{\pi^{m/2}}{\Gamma(m/2+1)}$ is the volume of the unit ball in $\real^m$.
\end{lemma}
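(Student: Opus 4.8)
The plan is to reduce the infimum over covers to an optimization over finite \emph{partitions}, and then combine three ingredients: the metric--measure estimate \eqref{eq:c} (which plays the role of a sphere-packing/covering bound), H\"older's inequality (to trade the local mass $\mu_X(Y_i)$ for the global R\'enyi quantity $\int_M p^s\,\diff\Hc^m_M$), and convexity of $t\mapsto t^\gamma$ (to extract the $N$-dependent factor). First I would discard degenerate covers: if some $Y_i$ is unbounded then $r_c(Y_i)=\infty$ and \eqref{eq:sphere packing inequality} is vacuous, so I may assume every $Y_i$ is bounded. Then I would disjointify the cover, setting $\tilde Y_i := Y_i\setminus\bigcup_{j<i}Y_j$; the sets $\tilde Y_i$ are measurable, partition $\bigcup_iY_i\supseteq M$, and satisfy $r_c(\tilde Y_i)\le r_c(Y_i)$, since the Chebyshev ball of $Y_i$ contains $\tilde Y_i\subseteq Y_i$. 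As exactly one $\tilde Y_i$ contains any given $X\in M$, this yields $\expect_X[\sum_i\mathbf{1}_{Y_i}(X)\,r_c(Y_i)^2]\ge\sum_i\mu_X(\tilde Y_i)\,r_c(\tilde Y_i)^2$, so it suffices to bound $\sum_{i=1}^N q_i\rho_i^2$ from below, uniformly over measurable partitions $\{A_i\}_{i=1}^N$ of $M$, with $q_i:=\mu_X(A_i)$ (hence $\sum_iq_i=1$), $V_i:=\Hc^m_M(A_i)$ and $\rho_i:=r_c(A_i)$.

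Next I would run the analytic core. Since $A_i$ is contained in the closed ball of radius $\rho_i$ about its Chebyshev center, letting $\delta\downarrow\rho_i$ in \eqref{eq:c} gives $V_i\le c_M\rho_i^m$, i.e. $\rho_i^2\ge c_M^{-2/m}V_i^{2/m}$. To link $V_i$ with $q_i$ I would apply H\"older's inequality with exponents $s$ and $\tfrac{s}{s-1}$ (for finite $s>1$): $q_i=\int_{A_i}p\,\diff\Hc^m_M\le\big(\int_M p^s\,\diff\Hc^m_M\big)^{1/s}V_i^{(s-1)/s}$; writing $\mathcal{P}:=\int_M p^s\,\diff\Hc^m_M=e^{(1-s)h_s(X)}$, this rearranges to $V_i^{2/m}\ge e^{\frac2m h_s(X)}\,q_i^{\frac{2s}{m(s-1)}}$. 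Substituting into the previous display and setting $\gamma:=1+\tfrac{2s}{m(s-1)}>1$, I would then use convexity of $t\mapsto t^\gamma$ on $[0,\infty)$ together with Jensen's inequality over the $N$ cells (with uniform weights) to get $\sum_i q_i^\gamma\ge N^{1-\gamma}=N^{-2/(m(1-1/s))}$. Chaining the three inequalities gives $\sum_i q_i\rho_i^2\ge c_M^{-2/m}\,e^{\frac2m h_s(X)}\,N^{-2/(m(1-1/s))}$. For $s=\infty$ the only change is that the H\"older step is replaced by $q_i\le\|p\|_{L^\infty(M)}V_i$ with $\|p\|_{L^\infty(M)}=e^{-h_\infty(X)}$ --- the same bound under the convention $\tfrac{s}{s-1}\big|_{s=\infty}=1$, and also the limit as $s\to\infty$ since $\Hc^m_M(M)<\infty$. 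Because the resulting bound holds for every cover and every admissible $s$, taking the infimum over covers on the left and the maximum over $s\in(1,\infty]$ on the right produces \eqref{eq:sphere packing inequality}.

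I do not anticipate a substantive obstacle: the argument is careful bookkeeping rather than a new idea. The step that most needs attention is exponent matching --- one must couple the H\"older exponent to the R\'enyi index $s$ so that $c_M$, the factor $e^{\frac2m h_s}$, and the power of $N$ assemble into exactly the exponents $\tfrac2m$ and $\tfrac2{m(1-1/s)}$ claimed. Smaller points to handle cleanly are: the open-versus-closed ball discrepancy in \eqref{eq:c}, resolved by $\delta\downarrow\rho_i$; cells with $\mu_X(A_i)=0$ or empty cells, which are harmless since $0^\gamma=0$ and Jensen is unaffected; finiteness of $\Hc^m_M(M)$ and of each $V_i$, which follows from $M$ being a bounded finite union of $C^1$ $m$-manifolds and which is also what licenses the $s\to\infty$ limit $\|p\|_{L^s(M)}\to\|p\|_{L^\infty(M)}$; and the monotonicity $r_c(\tilde Y_i)\le r_c(Y_i)$ under inclusion, immediate from the definition of the Chebyshev ball.
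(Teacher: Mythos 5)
Your argument is correct and follows essentially the same route as the paper's proof: reduce to the trace of the cover on $M$, bound the Chebyshev radius from below via the measure estimate \eqref{eq:c}, apply H\"older with exponent $s$ to relate cell probability to cell Hausdorff measure, use Jensen on $t\mapsto t^{1+\alpha}$ to extract the $N^{-2/(m(1-1/s))}$ factor, and identify the resulting constant with the R\'enyi entropy. The only cosmetic differences are that you disjointify the cover first (the paper works directly with the possibly overlapping sets $Y_i\cap M$, which is harmless since $\sum_i\mu_X(Y_i\cap M)\ge 1$ only strengthens the Jensen step) and that you pass through the cell measure $V_i$ rather than the measure of the Chebyshev ball.
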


\begin{proof}
Define $S_i := Y_i \cap M \subset M$. Then $\{S_i\}_{i=1}^N$ forms a measurable $m$-dimensional cover of $M$. Let $p_i := \mu_X(S_i) = \int_{S_i} pd\mathcal{H}^m$ and $r_i := r_c(S_i)$. Because $S_i \subset Y_i,$ then $r_i \leq r_c(Y_i)$, giving
\[
\mathbb{E}_X\Big[\sum_{i=1}^N \mathbf{1}_{Y_i}(X) r_c(Y_i)^2\Big] 
= \sum_{i=1}^N p_i r_c(Y_i)^2 \ge \sum_{i=1}^N p_i r_i^2.
\]
Hence it suffices to lower bound $\sum_{i=1}^N p_i r_i^2$ over collections $\Yc$.
  
For a given $i$, by definition, $S_i \subset B(c_i, r_i) \cap M$ for some Chebyshev center $c_i$. For any $s>1$, we have:
\begin{align*}
    p_i &\le \int_{B(c_i,r_i)\cap M} p \, d\mathcal{H}^m\\
    &=\int_{M}p\,\mathbf{1}_{B(c_i,r_i)\cap M}\,d\Hc^m\\ 
    &\le \Big(\int_Mp^s d\Hc^m\Big)^{1/s}\Big(\int_M(\mathbf{1}_{B(c_i,r_i)\cap M})^{\frac{s}{s-1}}\,d\Hc^m\Big)^{1-1/s}\\
    &\le \|p\|_{L^s(M)} \, (\mathcal{H}^m(B(c_i,r_i)\cap M))^{1-1/s}\\
    &\le \|p\|_{L^s(M)}\, (c_{M}\,r_i^m)^{1-1/s},
\end{align*}
where $\|p\|_{L^s(M)}\coloneqq \Big(\int_Mp^s d\Hc^m\Big)^{1/s},$ in the third step we used Hölder's inequality, and in the final step we used the inequality \eqref{eq:c}. 
Defining $
K_s \coloneqq \|p\|_{L^s(M)} \, c_{M}^{1-1/s}$, from the inequality above we have:
\[
r_i \ge \left( \frac{p_i}{K_s} \right)^{\frac{1}{m(1-1/s)}} \implies r_i^2 \ge \left( \frac{p_i}{K_s} \right)^\alpha,
\]
where $\alpha := 2/(m(1-1/s)) = 2s/(m(s-1))$. Multiplying by $p_i$ gives
\begin{equation}\label{eq:pr2_first_lower_bound}
p_i r_i^2 \ge K_s^{-\alpha} \, p_i^{1+\alpha} \implies \sum_{i=1}^Np_i r_i^2 \ge K_s^{-\alpha}\sum_{i=1}^Np_i^{1+\alpha}.
\end{equation}
Our job now is to find a lower bound to $\sum_{i=1}^Np_i^{1+\alpha}$ over discrete probabilities $p_i$. First, notice that $\alpha >0$ since $s > 1$. Therefore, the map $t \mapsto t^{1+\alpha}$ is convex in $t\in[0,+\infty)$. Thus, by Jensen's inequality,
\[
\sum_{i=1}^N p_i^{1+\alpha} \ge N\bigg(\frac{1}{N}\sum_{i=1}^N p_i\bigg)^{1+\alpha} = N \left( \frac{1}{N} \right)^{1+\alpha} = N^{-\alpha}.
\]
Substituting in \eqref{eq:pr2_first_lower_bound} gives
\[
\sum_{i=1}^N p_i r_i^2 \ge (K_s N)^{-\alpha} = \Big(\int_Mp^s d\Hc^m\Big)^{-\alpha/s}N^{-\alpha} \, c_M^{-2/m}.
\]

Now, by definition of the Rényi entropy,
\[ h_s(x) = \frac{1}{1-s}\log\expect\Big[p(x)^{s-1}\Big] = \frac{1}{1-s}\log\int\Big(\frac{\diff \mu_x}{\diff \Hc_\Xc^m}\Big)^{s-1}\diff\mu_x,\]
which by the properties of the Radon--Nikodym derivative gives
\[
\begin{aligned}
    h_s(x) &= \frac{1}{1-s}\log\int\Big(\frac{\diff \mu_x}{\diff \Hc_\Xc^m}\Big)^{s}\diff\Hc_\Xc^m \\
    &= \frac{1}{1-s}\frac{s}{\alpha}\frac{\alpha}{s}\log\int p^{s}\diff\Hc_\Xc^m \\
    &= -\frac{s}{\alpha(1-s)}\log\Big(\int p^{s}\diff\Hc_\Xc^m\Big)^{-\alpha/s}.
\end{aligned}
\]
And, using $\alpha(1-s) = -2s/m$ gives
\[
\begin{aligned}
&h_s(x) = \frac{m}{2}\log\Big(\int p^{s}\diff\Hc_\Xc^m\Big)^{-\alpha/s} \\
&\iff \Big(\int p^{s}\diff\Hc_\Xc^m\Big)^{-\alpha/s} = \mathrm{e}^{\frac{2}{m}h_s(x)}.
\end{aligned}
\]

Therefore, for any $s > 1$, we have
\[ \inf_{\Yc}
\mathbb{E}_{X } \Big[ \sum_{i=1}^N \mathbf{1}_{Y_i}(X) \, r_c(Y_i)^2 \Big]
\ge c_M^{-2/m} \, e^{\frac{2}{m}h_{s}(X)}\, N^{-\frac{2}{m(1-1/s)}}.\]

\end{proof}

We proceed with the proof of Thm. \ref{thm:slb_abstractions}.

\begin{proof}[\textbf{Proof of Thm. \ref{thm:slb_abstractions}}]
    We make use of Prop. \ref{prop:abstraction vs encoder distortion}. Take \eqref{eq:prop abstraction vs encoder} and minimize both sides over all possible partitions $\Yc$ with size $|\Yc|\leq e^R$ and associated abstractions $A$. We have
    \begin{equation*}
        D_{abs}(R)\geq \inf_{A, |\Yc|\leq e^R}\frac{1}{l}\expect_{\xi}[\|\xi-\xi_{q_A}\|^2] + \frac{1}{l}\expect_{\xi}[r_c^2(\Omega_A)],
    \end{equation*}
    where recall that $\expect_{\xi_0}[\cdot] = \expect_{\xi}[\cdot]$, and that for a given abstraction $A$ with corresponding encoder-decoder pair $s_A,g_A$, we have $\xi_{q_A} = g_{q_A}(s_{q_A}(\xi))$ with  $s_{q_A}(\xi) = g_A(s_A(\xi))$ and $g_{q_A}(z) = x_c(z)$, where $x_c(z)$ is the Chebyshev center of the set $z$; and $r_c(\Omega_A)$ is the Chebyshev radius of $\Omega_A$. Thus, $\xi_{q_A}$ is the output of the encoder-decoder pair $(s_{q_A},g_{q_A})$ with rate $R$ and message $\xi$. Hence, the first term in the left-hand side of the above inequality, can be lower bounded by employing Thm. \ref{thm:slb}, to obtain: 
    \begin{equation*}
        D_{abs}(R)\geq \frac{n}{2l}\Big(\frac{e^{-R+h(\xi)-n/2}}{c_{\Bc_l^S}\Gamma(1+n/2)}\Big)^{2/n} + \frac{1}{l}\inf_{A, |\Yc|\leq e^R} \expect_{\xi}[r_c^2(\Omega_A)].
    \end{equation*}
    To bound the second term, we employ Lemma \ref{lem:sphere_packing}. Notice that the abstraction's outputs $\Omega_A$ are $nl$-dimensional and define a cover\footnote{This cover is precisely $\Zc:=\{Z: Z = g_A(s_A(x_0)), x_0\in \Xc\}$ and note that $s_A(x)$ takes values in the set $|\Yc|$. Thus $|\Zc|=|\Yc|$.} of $\Bc_l^S$ (which is $n$-dimensional) with cardinality $|\Yc|\leq e^R$ (the same as the state-space partition). Thus, the term $\inf_{A, |\Yc|\leq e^R} \expect[r_c^2(\Omega_A)]$ can be lower bounded as in \eqref{eq:sphere packing inequality}, where we replace $m$ by $n$, $M$ by $\Bc_l^S$, $N$ by $e^R$, and $\mu_X$ by $\mu_{\xi}$. 
\end{proof}

\begin{proof}[\textbf{Proof of Prop. \ref{prop:entropy}}]
    Fix any measurable subset $\Ac \subseteq \Xc$. Because $\mu_{\xi_0}(\Ac) = \mu_\xi(b_l(\Ac)),$ the definitions of $p_\xi$ and $p_{\xi_0}$ imply that
    \[ 
    \int_{b_l(A)}p_\xi(y)\diff \Hc^n_{\Bc^S_l}(y) 
    = \int_{\Ac}p_{\xi_0}(x)\diff \Lc^n(x). 
    \]
    But also, since $b_l$ is injective, the area formula \cite[Thm.~3.2.5]{federer1969geometric} gives 
    \[
    \int_{b_l(\Ac)}\hspace{-2mm}p_\xi(y)\diff \Hc^n_{\Bc^S_l}(y) = \int_\Ac p_{\xi}(b_l(x))\sqrt{\det(J_{b_l}(x)\tran J_{b_l}(x))}\diff \Lc^n
    \]
    implying that, for almost all $x \in \Xc,$

    \begin{equation}\label{eq:behavior density}
    p_\xi(b_l(x)) = \frac{p_{\xi_0}(x)}{\sqrt{\det(J_{b_l}(x)\tran J_{b_l}(x))}}.
    \end{equation}

    Then, \eqref{eq:generalized entropy} becomes
    \[
    \begin{aligned}
        h(\xi) =&  
        -\int_{\real^n}p_{\xi_0}(x)\log(p_{\xi_0}(x))\diff \Lc^n \\ &+\frac{1}{2}\int_{\real^n}\log\det(J_{b_l}(x)\tran J_{b_l}(x))p_{\xi_0}(x)\diff \Lc^n.
    \end{aligned}
    \]
    Likewise, the area formula gives
    \begin{align*}
        h_{s}(\xi) =& \frac{1}{1-s} \log\int_{\Bc_l^S} p_{\xi}^{s} \, \diff \mathcal{H}^n \\
        =& \frac{1}{1-s} \log\int_{\real^n} p_{\xi}(b_l(x))^{s} \, \det(J_{b_l}(x)\tran J_{b_l}(x))^{\frac{1}{2}}\diff \Lc^n,
    \end{align*}
    and, applying \eqref{eq:behavior density} gives \eqref{eq:renyi trajectory}. In the particular case of $s = \infty,$ we have
    \[
    \begin{aligned}
    h_{s}(\xi) &= \frac{s}{1-s} \log \Big(\int_{\real^n} \frac{p_{\xi_0}(x)^{s}}{\det(J_{b_l}(x)\tran J_{b_l}(x))^{\frac{s-1}{2}}} \, \diff x \Big)^{1/s} \\
    &\underset{s\to\infty}{=} -\log \ess \sup \frac{p_{\xi_0}(x)}{\sqrt{\det(J_{b_l}(x)\tran J_{b_l}(x))}}\\
    &= \ess \,\inf \log \frac{\sqrt{\det(J_{b_l}(x)\tran J_{b_l}(x))}}{p_{\xi_0}(x)},
    \end{aligned}
    \]
    which gives \eqref{eq:renyi max trajectory} by the properties of $\log$.
    
    Finally, since $J_{b_l}\tran J_{b_l} = \eye + J_f\tran J_f + \cdots \succeq \eye,$ we have that $\det (J_{b_l}\tran J_{b_l}) \geq 1$, establishing that $h(\xi) \geq h(\xi_0)$ and $h_s(\xi) \geq h_s(\xi_0)$ for any $s > 1.$
\end{proof}

\begin{lemma}\label{lem:lipschitz ball}
Let $X \subset \mathbb{R}^n,$ and $f : X \to \mathbb{R}^N$, $N \geq n,$ be a bi-Lipschitz function satisfying
\[
\|x - x'\| \leq \|f(x) - f(x')\| \leq L \|x - x'\|, \quad \forall x,x' \in X,
\]
for some $L \geq 1$. Then for every $y \in \mathbb{R}^N$ and $\delta > 0$,
\[
\Hc^n(f(X) \cap B(y,\delta)) \leq L^n v_n \delta^n.
\]
\end{lemma}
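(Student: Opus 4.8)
The plan is to reduce the Hausdorff measure of the image $f(X) \cap B(y,\delta)$ to the Hausdorff measure of a subset of the domain $X$, and then apply the elementary ball bound \eqref{eq:c} (with $c_X = v_n$, since $X \subseteq \real^n$ is $n$-dimensional and hence $\Hc^n = \Lc^n$ on $X$, for which $\Lc^n(B(\hat x,\delta)) = v_n \delta^n$). First I would let $Z := f^{-1}(f(X) \cap B(y,\delta)) \subseteq X$, which is well-defined because the lower Lipschitz bound forces $f$ to be injective. The key geometric observation is that the \emph{lower} bound $\|x - x'\| \le \|f(x) - f(x')\|$ makes $f^{-1}$ (restricted to $f(X)$) Lipschitz with constant $1$; equivalently, $f$ expands distances, so preimages are ``smaller'' than images.

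The main step is to control the diameter or the covering of $Z$. Since $f(Z) = f(X) \cap B(y,\delta)$ has diameter at most $2\delta$, and since $f^{-1}$ is $1$-Lipschitz on $f(X)$, the set $Z$ has diameter at most $2\delta$ as well. Hence $Z$ is contained in a ball $B(\hat x, \delta')$ of $\real^n$ with $\delta' $ at most $2\delta$ (or, with a slightly more careful argument using a Chebyshev-center type bound, exactly $\delta$; but even the crude bound suffices up to constants — I would aim for the sharp constant by noting $Z \subseteq B(x_0, \delta)$ for any fixed $x_0 \in Z$ is too weak, so instead I would use that $f(Z)$ lies in a ball of radius $\delta$ and pull back). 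Then
\[
\Hc^n(Z) = \Lc^n(Z) \le \Lc^n(B(\hat x,\delta')) = v_n (\delta')^n.
\]

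To finish, I would relate $\Hc^n(f(X)\cap B(y,\delta)) = \Hc^n(f(Z))$ back to $\Hc^n(Z)$ using the \emph{upper} Lipschitz bound: a function that is $L$-Lipschitz increases $n$-dimensional Hausdorff measure by at most a factor $L^n$, i.e. $\Hc^n(f(Z)) \le L^n \Hc^n(Z)$. Combining the two inequalities yields $\Hc^n(f(X)\cap B(y,\delta)) \le L^n v_n (\delta')^n$, and with the sharp $\delta' = \delta$ this is exactly the claim. The main obstacle is getting the constant right: the naive ``diameter $\le 2\delta$'' bound gives an extra $2^n$, so the delicate part is arguing that the preimage $Z$ actually fits inside a ball of radius exactly $\delta$ in $\real^n$ — this I would handle by applying the contraction property of $f^{-1}$ directly to the relation ``$f(Z) \subseteq B(y,\delta)$'', observing that $f^{-1}(B(y,\delta) \cap f(X))$ is contained in the closed ball of radius $\delta$ centered at $f^{-1}(y')$ for a suitable $y'$, or by invoking that $1$-Lipschitz maps do not increase the radius of a bounded set. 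An alternative, cleaner route that sidesteps the constant issue entirely is to cover $f(X) \cap B(y,\delta)$ efficiently, pull the cover back through the $1$-Lipschitz $f^{-1}$, push it forward again through the $L$-Lipschitz $f$, and pass to the infimum in the definition of $\Hc^n$; this directly gives the factor $L^n$ without ever introducing a spurious $2^n$.
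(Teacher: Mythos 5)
Your proposal follows the same route as the paper's proof: pull $f(X)\cap B(y,\delta)$ back to its preimage $E\subseteq X$, use the lower Lipschitz bound to show $\mathrm{diam}(E)\le 2\delta$ and hence $\Hc^n(E)\le v_n\delta^n$, then push forward through the $L$-Lipschitz map $f$ to pick up the factor $L^n$. The step you flag as delicate is indeed the only nontrivial point, but your two proposed fixes are the weakest part of the write-up: centering a ball at $f^{-1}(y')$ does not work when $y\notin f(X)$ (and choosing $y'\in f(X)\cap B(y,\delta)$ only yields radius $2\delta$), while the claim that ``$1$-Lipschitz maps do not increase the radius of a bounded set'' is only usable here after a Kirszbraun extension of $f^{-1}$ from $f(X)$ to all of $\real^N$, so that it can be applied to the center $y$. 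The clean way to close the gap, with no extra machinery, is the isodiametric inequality: any $E\subset\real^n$ with $\mathrm{diam}(E)\le 2\delta$ satisfies $\Hc^n(E)=\Lc^n(E)\le v_n\delta^n$, even though such a set need not be contained in a ball of radius $\delta$ (an equilateral triangle of side $2\delta$ is the standard counterexample in $\real^2$). Note that the paper's own proof deduces containment in a radius-$\delta$ ball directly from the diameter bound, which has exactly this imprecision; your instinct that this is where a spurious $2^n$ could creep in is correct, and either the isodiametric inequality or the Kirszbraun route repairs both arguments.
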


\begin{proof}
Fix $y \in \mathbb{R}^N$ and $\delta > 0$ and define $Z \coloneqq f(X) \cap B(y,\delta)$ and its pre-image $E \coloneqq f^{-1}(Z) \subset \real^n$. We start by finding a ball in $\real^n$ bounding $E$. 

For any $x_1, x_2 \in E$, we have $f(x_1), f(x_2) \in B(y,\delta)$, so
\[
\|f(x_1) - f(x_2)\| \leq \|f(x_1) - y\| + \|f(x_2) - y\| < 2\delta.
\]
By the lower Lipschitz bound $\|x_1 - x_2\| \leq \|f(x_1) - f(x_2)\|$, it follows that $\|x_1 - x_2\|\leq 2\delta.$
This implies that $E$ is contained in some $n$-dimensional ball of radius $\delta$. Therefore, $\Hc^n(E) \leq v_n \delta^n.$

Since $f$ is $L$-Lipschitz, by fundamental properties of the Hausdorff measure \cite[Sec.~2.10.11]{federer1969geometric}
\[
\Hc^n(Z) = \Hc^n(f(E)) \leq L^n \Hc^n(E) \leq L^n v_n \delta^n.
\] %
\end{proof}

\begin{proof}[\textbf{Proof of Prop. \ref{prop:c}}]
    We again use the function $b_l:\Xc\to\Bc_l^S$, defined by \eqref{eq:b}. Since by assumption $\Xc$ is full dimensional in $\real^n$, the tightest value for $c_\Xc$ is $c_{\real^n} = v_n.$ Now we look at each case.

    Case (1) follows trivially by the observation that $\Bc_l^S$ is an $n$-dimensional affine subset of $\real^{nl}$, and that the intersection of an $nl$-ball of radius $r$ and a plane of dimension $n$ is a ball of dimension $n$ and radius $\leq r.$ Hence, $\Hc_{\Bc_l^S}(B(z,\delta)) \leq v_n\delta^n$, for all $z\in\real^{nl}$.

    Case (2): If $f$ is piecewise affine, so is $\Bc_l^S$, which has at most $M^l$ disjoint pieces. Denote by $Z_i$ each such piece of $\Bc_l^S$, which is a bounded, connected $n$-dimensional subset of some affine subspace of $\real^{nl}.$ Thus, $\Bc_l^S = \bigcup_{i=1}^{N}Z_i$, with $N\leq M^l$. Then, for all $z \in \real^{nl}$ and $\delta > 0,$ 
    \[
    \Hc^n\Big(\bigcup_iZ_i\cap B(z,\delta)\Big) 
    = \sum_{i=1}^{N}\Hc^n(Z_i\cap B(z,\delta))
    \leq M^lv_n\delta^n,
    \]
    where in the last inequality we have used case (1) and the fact that $N\leq M^l$.

    Case (3): It is easy to see that $b_l$ is bi-Lipschitz with
    \[ \|x-y\| \leq \|b_{l}(x) - b_{l}(y)\| \leq \Big(\sum_{i=0}^{l} L^{2i}\Big)^{1/2}\|x-y\|. \]
    Hence the result comes from applying Lemma \ref{lem:lipschitz ball}.
\end{proof}

\bibliography{refs.bib}
\bibliographystyle{IEEEtran}

\end{document}